\pgfplotsset{compat=1.17}
\newtheorem{theorem}{Theorem}[section]
\newtheorem{proposition}[theorem]{Proposition}
\newtheorem{lemma}[theorem]{Lemma}
\newtheorem{corollary}[theorem]{Corollary}
\newtheorem{definition}[theorem]{Definition}
\crefname{theorem}{Theorem}{Theorems}
\Crefname{theorem}{Theorem}{Theorems}
\crefname{proposition}{Proposition}{Propositions}
\Crefname{proposition}{Proposition}{Propositions}
\crefname{lemma}{Lemma}{Lemmas}
\Crefname{lemma}{Lemma}{Lemmas}
\crefname{corollary}{Corollary}{Corollaries}
\Crefname{corollary}{Corollary}{Corollaries}
\crefname{definition}{Definition}{Definitions}
\Crefname{definition}{Definition}{Definitions}
\crefname{figure}{Figure}{Figures}
\Crefname{figure}{Figure}{Figures}
\crefname{table}{Table}{Tables}
\Crefname{table}{Table}{Tables}
\crefname{section}{Section}{Sections}
\Crefname{section}{Section}{Sections}
\crefname{algorithm}{Algorithm}{Algorithms} 
\Crefname{algorithm}{Algorithm}{Algorithms} 
\DeclareMathOperator*{\argmax}{argmax}
\DeclareMathOperator*{\argmin}{argmin}
\algnewcommand\algorithmicforeach{\textbf{for each}}
\title{Efficient algorithms for the sensitivities of the Pearson correlation coefficient and its statistical significance to online data}
\author{
    Marc Harary \\
	Department of Data Science \\
	Dana-Farber Cancer Institute \\
	Boston, MA 02115 \\
	\texttt{marc@ds.dfci.harvard.edu}
}
\begin{document}

\maketitle

\begin{abstract}
	Reliably measuring the collinearity of bivariate data is crucial in statistics, particularly for time-series analysis or ongoing studies in which incoming observations can significantly impact current collinearity estimates. Leveraging identities from Welford's online algorithm for sample variance, we develop a rigorous theoretical framework for analyzing the maximal change to the Pearson correlation coefficient and its p-value that can be induced by additional data. Further, we show that the resulting optimization problems yield elegant closed-form solutions that can be accurately computed by linear- and constant-time algorithms. Our work not only creates new theoretical avenues for robust correlation measures, but also has broad practical implications for disciplines that span econometrics, operations research, clinical trials, climatology, differential privacy, and bioinformatics. Software implementations of our algorithms in Cython-wrapped C are made available at \href{https://github.com/marc-harary/sensitivity}{\texttt{https://github.com/marc-harary/sensitivity}} for reproducibility, practical deployment, and future theoretical development.
\end{abstract}

\section{Introduction}
\label{sec:intro}

The Pearson correlation coefficient (PCC), one of the simplest measures of collinearity between two random variables, has been an essential component of data analysis for over a century \cite{pearson1895vii}. Along with other collinearity measures, it finds applications in quantitative disciplines that span physics, psychology, economics, and mechanical engineering \cite{casella2024statistical}. Like many statistical quantities, however, its utility can be significantly curtailed by the presence of outliers---data points that differ widely from the underlying distribution from which the majority of a dataset is sampled \cite{huber2004robust}. This issue is exacerbated by online data collection, during which new---and potentially outlying---data points have the potential to significantly alter the overall sample PCC.

Such scenarios are common across fields that rely on time-series data or otherwise ongoing observations, necessitating an understanding of the extent to which online PCC estimates are susceptible to change. For example, in medicine, linear relationships between patient covariates and clinical outcomes are often continuously monitored during drug development trials \cite{sliwoski2014computational, drews2000drug, hughes2011principles}. In finance and econometrics, investors dynamically manage their portfolios based in part on the PCC between assets, but are also forced to adjust these estimates given changing market conditions, evolving relationships between financial instruments, and ongoing data collection \cite{gujarati2009basic, hayashi2011econometrics, karpoff1987relation}.

 Obviously, frequentist statistics provides many common methods to quantify the certainty of sample estimates, including standard confidence intervals and hypothesis-testing \cite{casella2024statistical}. As for the problem of accounting for outliers, robust statistics \cite{huber2004robust, maronna2019robust} offers both simple and sophisticated analytical methods for estimating parameters that might be affected by outlying data. For instance, winsorization, analogous to clipping in signal processing, offers a straightforward technique for minimizing these effects \cite{dixon1974trimming}. Order statistics \cite{pitas1992order}, and their generalizations in the form of M-statistics and other measures \cite{huber2004robust, maronna2019robust}, offer a high breakdown-point, which quantifies the fraction of data points that are required to be outliers before the metrics' values extend beyond the range of inlying data \cite{huber2004robust, maronna2019robust}. Similar measures like Cook's distance \cite{cook2000detection} and empirical impact functions \cite{greene2003econometric} provide alternative means of gaining insight into the effect of outliers. For the specific case of linear relationships, \cite{rousseeuw2005robust} developed efficient trimming methods like Least Trimmed Squares (LTS) for simultaneously identifying and rejecting outlying points in regression analysis.

However, none of these approaches either directly addresses changes in sample estimates as a result of dynamic datasets \textit{per se} or offers an elegant method for computing the worst-case change across all possible observations. Measures of statistical significance like p-values also evolve as a function of incoming data, meaning that if the desired metric is not the PCC itself but rather its statistical significance, hypothesis-testing does not address the challenge of learning the target metric online. While robust statistics can provide invaluable insight into the effect of outliers on datasets, it assumes that all data points are fixed and \textit{a fortoriori} that outliers are already present in the sample \cite{huber2004robust, maronna2019robust}. Moreover, many robust estimators can be quite intensive to compute \cite{huber2004robust, maronna2019robust}, and unlike their far more popular and common non-robust counterparts like the PCC, are less immediately interpretable. As is the case for robust statistics, influence measures like empirical impact functions are also intended for scenarios in which known outliers already contaminate the data rather than for dynamic settings in which statisticians must simultaneously consider the entire space of incoming points.

In the following work, we derive analytical solutions to the sensitivity of the PCC and p-value of datasets with respect to the addition of new data, including for the worst-case observations that will induce the maximal changes. By drawing from Welford's online algorithm for variance \cite{welford1962note}, we derive a rigorous framework for analyzing the sensitivity of online collinearity measures. Our main results are that the solutions for the worst-case points are not only closed-form, but easily computable in either linear or constant time with a high degree of numerical precision.

This paper is structured as follows. In \cref{sec:prereq}, we establish prerequisite definitions, clarify notation, and state preliminary theorems. Our primary results are in \cref{sec:res}, in which we develop the closed-form solutions to primary sensitivity. \cref{sec:algo} provides pseudocode, a proof of correctness, and runtime analysis for an algorithm to efficiently compute these solutions. Experiments are conducted on synthetic and real-world data in \cref{sec:exp}, which are then followed by a brief discussion and suggestions for future directions in \cref{sec:disc}. We conclude in \cref{sec:conc}.

\section{Prerequisites}
\label{sec:prereq}

\begin{table}[ht]
\centering
\caption{Key notation}
\label{tab:notation}
\begin{tabular}{@{}cl@{}}
\toprule
Symbol & Description \\ \midrule
\multicolumn{2}{c}{\textbf{Topology of Feasible Region}} \\
$\mathcal{F}$ & Feasible region \\
$\text{int} \left( \mathcal F \right)$ & Interior of $\mathcal F$ \\
$\partial \mathcal F$ & Boundary of $\mathcal F$ \\
$\partial \mathcal U_y, \partial \mathcal L_y, \partial \mathcal U_x, \partial \mathcal L_x$ & Upper, lower, right, left edges of $\partial \mathcal F$ \\
$\ell_x, u_x, \ell_y, u_y$ & Lower, upper bounds for $x$, $y$ \\
$\mathcal C = \left\{ C_{\ell, \ell}, C_{\ell, u}, C_{u, \ell}, C_{u, u} \right\}$ & Corner points \\
$\mathcal I_{n-1} = \left\{ I_{x, \ell, n-1}, I_{x, u, n-1}, I_{y, \ell, n-1}, I_{y, u, n-1} \right\}$ & Intersection points \\ \midrule
\multicolumn{2}{c}{\textbf{Welford's Quantities}} \\
$s_{x,n-1}, s_{y,n-1}$ & Biased sample standard deviations of $D$\\
$s_{xy,n-1}$ & Biased sample covariance of $D$ \\
$\bar{x}_{n-1}, \bar{y}_{n-1}$ & Sample means of $D$ \\ \midrule
\multicolumn{2}{c}{\textbf{Statistical Measures}} \\
$D$ & Bounded multiset \\
$n-1$ & Cardinality of $D$ \\
$\left(x_n, y_n \right)$ & New data point to add to $D$ \\
$r_{n-1} \equiv r(D)$ & PCC of $D$ \\
$p_{n-1} \equiv p(D)$ & p-value of $r_{n-1}$ \\
$r_{n} \left( x_n, y_n \right) \equiv r \left( D \cup \left\{ (x_n, y_n) \right\} \right)$ & PCC of $D$ with new point \\
$p_{n} \equiv p\left(D \cup \left\{ (x_n, y_n) \right\} \right)$ & p-value of $r_{n}$ \\
$\Delta_1 r(D, \mathcal{F}), \ \Delta_1 p(D, \mathcal{F})$ & Primary sensitivities $D$ within $\mathcal{F}$ \\
$\Delta_k r(D, \mathcal{F}), \ \Delta_k p(D, \mathcal{F})$ & $k$-ary sensitivities $D$ within $\mathcal{F}$ \\ \midrule
\multicolumn{2}{c}{\textbf{Regression Parameters}} \\
$\hat{\alpha}_{x, n-1}, \hat{\beta}_{x, n-1}, \hat{\alpha}_{y, n-1}, \hat{\beta}_{y, n-1}$ & BLUE values regressing $y$ onto $x$, $x$ onto $y$ \\
$\mathcal G_{x, n-1}, \mathcal G_{y, n-1}$ & Subset of $\mathcal F$ lying on BLUE lines \\ \midrule
\multicolumn{2}{c}{\textbf{Miscellaneous}} \\
$\mathbf H_{r_n} \left( x_n, y_n \right)$ & Hessian matrix of $r_n$ at $\left( x_n, y_n \right)$ \\
$\cup$ & Union operator (for both sets and multisets) \\
$| \cdot |$ & Cardinality (for both sets and multisets) \\
\bottomrule
\end{tabular}
\label{tab:not}
\end{table}

Concretely, we consider a bounded dataset (\cref{tab:not})
\begin{equation}
	D = \{(x_i, y_i)\}_{i=1}^{n-1} \quad \text{where} \quad \ell_x \leq x_i \leq u_x \quad \text{and} \quad \ell_y \leq y_i \leq u_y \quad \forall i
\end{equation}
and put the feasible region (\cref{fig:2d})
\begin{equation}
	\mathcal F := \left[ \ell_x, u_x \right] \times \left[ \ell_y, u_y \right].
\end{equation}
Note that in this section, for brevity, it is easier to define $n$ such that $|D| = n - 1$. We also note that $D$ is not a set in the traditional sense, but rather a multiset \cite{blizard1989multiset}; moreover, we abuse notation slightly by indexing each point $(x_i, y_i)$ and by using superscripts to indicate the Cartesian product of $\mathcal F$ with itself in the context of multisets. We define the sample correlation coefficient of $D$ as follows.

\begin{definition}
\label{def:pcc}
The Pearson correlation coefficient (PCC) of a dataset $D$ is given by
\begin{equation}
	r \left( D \right) := \frac{\sum_{i=1}^{n-1} \left( x_i - \bar x_{n-1} \right) \left( y_i - \bar y_{n-1} \right)}{\sqrt{\sum_{i=1}^{n-1} \left( x_i - \bar x_{n-1} \right)^2 \sum_{i=1}^{n-1} \left( y_i - \bar y_{n-1} \right)^2}},
\end{equation}
where $\bar x_{n-1}$ and $\bar y_{n-1}$ are the sample means of $D$ \cite{pearson1895vii}.
\end{definition}
This definition makes our motivation for imposing constraints on our datasets more obvious. Clearly, if incoming observations were not bounded, worst-case points could be infinitely large, overshadowing the present data and skewing the PCC to an arbitrary degree towards -1 or 1. We also opt for a rectangular feasible region due to its simplicity and similarity to winsorization \cite{dixon1974trimming}. For our proofs below, assuming this rectangular region corresponds to the unit square also facilitates our invocation of key theorems in real analysis \cite{smart1980fixed}. We nevertheless consider more complex boundaries in \cref{sec:disc}.

We also enumerate Student's t-test for ease of reference below. The monotonic relationship of the p-value with the PCC will also dramatically simplify our analyses.
\begin{theorem}[Student's t-test for the Pearson correlation coefficient]
\label{thm:ttest}
The significance of a PCC $r_{n-1}$ is given by
\begin{equation}
	p(D) \equiv p_{n-1} = 2 \left(1 - F_t(|t_{n-1}|, \text{df}_{n-1})\right),
\end{equation}
where 
\begin{equation}
t_{n-1} = r_{n-1} \sqrt{\frac{n - 3}{1 - r_{n-1}^2}},
\end{equation}
\begin{equation}
	\text{df}_{n-1} = n - 3,
\end{equation}
and $F_t$ is the cumulative density function (CDF) of Student's t-distribution \cite{casella2024statistical}.
\end{theorem}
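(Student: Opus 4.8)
The plan is to prove the claim in the standard way, by first reducing the two-sided p-value to a statement about the null sampling distribution of $r_{n-1}$ and then identifying that distribution explicitly. By definition, $p_{n-1}$ is the probability, under the null hypothesis $H_0 : \rho = 0$ of zero population correlation, of observing a sample correlation at least as extreme in magnitude as $r_{n-1}$; for the two-sided alternative this is $P\left( |T| \ge |t_{n-1}| \right)$ for an appropriate statistic $T$. Since the density of Student's $t$-distribution is symmetric about the origin, this probability equals $2\left( 1 - F_t(|t_{n-1}|, \mathrm{df}_{n-1}) \right)$, which already recovers the outer form of the claimed expression. It therefore remains to show that the pivotal quantity $t_{n-1} = r_{n-1}\sqrt{(n-3)/(1 - r_{n-1}^2)}$ follows a $t$-distribution with $n-3$ degrees of freedom under $H_0$.

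To establish this, I would pass through the equivalence between testing $\rho = 0$ and testing a zero slope in simple linear regression. Writing $m = n - 1$ for the sample size and conditioning on the observed $x_i$, regress $y$ on $x$ and let $\hat\beta$ denote the ordinary least-squares slope with standard error $\mathrm{SE}(\hat\beta)$. A direct algebraic manipulation, using $\hat\beta = s_{xy,n-1}/s_{x,n-1}^2$ together with the fact that the residual sum of squares equals $(1 - r_{n-1}^2)\sum_i (y_i - \bar y_{n-1})^2$, shows that $\hat\beta / \mathrm{SE}(\hat\beta) = r_{n-1}\sqrt{(m-2)/(1 - r_{n-1}^2)}$. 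Substituting $m - 2 = n - 3$ matches $t_{n-1}$ and simultaneously identifies the degrees of freedom, so the task collapses to proving that the regression $t$-ratio is $t$-distributed with $m - 2$ degrees of freedom.

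For this final and most delicate step, I would invoke the distribution theory of the Gaussian linear model. Under $H_0$ the responses are independent normals, so $\hat\beta$ is normal with mean zero and variance $\sigma^2 / \sum_i (x_i - \bar x_{n-1})^2$, while the residual sum of squares satisfies $\mathrm{RSS}/\sigma^2 \sim \chi^2_{m-2}$. The standardized ratio then has the form of a standard normal over the square root of an independent chi-squared divided by its degrees of freedom, which is precisely the definition of a $t_{m-2}$ variate, and the unknown scale $\sigma^2$ cancels. The crux of the argument---and the step I expect to be the main obstacle---is justifying the \emph{independence} of $\hat\beta$ and $\mathrm{RSS}$: this is not algebraic but distributional, following from Cochran's theorem (equivalently, Fisher's lemma) applied to the orthogonal decomposition of the Gaussian response vector into its projection onto the regression subspace and the residual subspace. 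Once independence and the two marginal laws are in hand, assembling the ratio and invoking the symmetry of the $t$-density completes the proof. An alternative route, which I would keep in reserve, derives the null density of $r_{n-1}$ directly and applies the change of variables $r \mapsto t$, but it ultimately relies on the same Gaussian independence structure.
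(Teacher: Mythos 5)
Your derivation is correct: the reduction of the two-sided p-value to $P(|T|\ge|t_{n-1}|)$, the algebraic identity $\hat\beta/\mathrm{SE}(\hat\beta)=r_{n-1}\sqrt{(m-2)/(1-r_{n-1}^2)}$ with $m=n-1$ (so $m-2=n-3$, matching the paper's indexing where $|D|=n-1$), and the appeal to Cochran's theorem for the independence of $\hat\beta$ and the residual sum of squares together constitute the standard and complete proof. The paper itself offers no proof of this theorem --- it is stated as a classical prerequisite and dispatched with a citation to Casella and Berger --- so there is nothing to compare against; your argument is exactly the textbook route the citation points to, and the only cosmetic addition worth making is a remark that the conditional $t_{m-2}$ law given the $x_i$ does not depend on the $x_i$, so it holds unconditionally as well.
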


\begin{corollary}
\label{cor:mono}
$p_{n-1}$ decreases monotonically with $|r_{n-1}|$.
\end{corollary}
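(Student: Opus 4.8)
The plan is to show that the p-value $p_{n-1}$ is a strictly decreasing function of $|r_{n-1}|$ by tracing the chain of dependencies through \cref{thm:ttest}. The key observation is that the p-value depends on $r_{n-1}$ only through the composition $p_{n-1} = 2\left(1 - F_t(|t_{n-1}|, \text{df}_{n-1})\right)$, where $|t_{n-1}| = |r_{n-1}|\sqrt{(n-3)/(1-r_{n-1}^2)}$ and the degrees of freedom $\text{df}_{n-1} = n-3$ are held fixed. So I would prove the claim by composing three monotonicity facts: first that $|t_{n-1}|$ increases in $|r_{n-1}|$, second that $F_t$ increases in its first argument, and third that $p_{n-1}$ decreases in $|t_{n-1}|$.

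First I would set $\rho := |r_{n-1}| \in [0,1)$ and write $g(\rho) = \rho\sqrt{(n-3)/(1-\rho^2)}$. Differentiating (or simply observing that both $\rho$ and $1/(1-\rho^2)$ are increasing and nonnegative on $[0,1)$) shows $g$ is strictly increasing, so $|t_{n-1}|$ is a strictly increasing function of $|r_{n-1}|$. Here I implicitly use that the statistic is well-defined only for $|r_{n-1}| < 1$ and that $n - 3 > 0$, i.e. the sample is large enough that the degrees of freedom are positive; I would note this regime assumption explicitly.

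Next I would invoke that $F_t(\cdot, \text{df})$, being a cumulative distribution function of a continuous distribution with a density that is strictly positive on all of $\mathbb{R}$, is strictly increasing in its argument. Composing with the previous step, $F_t(|t_{n-1}|, \text{df}_{n-1})$ is strictly increasing in $|r_{n-1}|$. Finally, since $p_{n-1} = 2\left(1 - F_t(|t_{n-1}|, \text{df}_{n-1})\right)$ is an affine, orientation-reversing transformation of $F_t$, the p-value is strictly decreasing in $|r_{n-1}|$, which is the claim.

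I do not expect any real obstacle here, as the result is essentially a composition of elementary monotonicity facts; the only subtlety worth flagging is the domain restriction, namely ensuring $|r_{n-1}| < 1$ so that $t_{n-1}$ is finite and $n - 3 > 0$ so that the t-distribution is well-defined. The main conceptual payoff of the corollary is not its difficulty but its consequence: because $p_{n-1}$ is a strictly monotone function of $|r_{n-1}|$, optimizing the p-value over the feasible region $\mathcal{F}$ reduces to optimizing $|r_{n}|$, which is precisely the reduction that will streamline the sensitivity analyses in \cref{sec:res}.
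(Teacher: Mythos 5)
Your proposal is correct and matches the paper's reasoning: the paper states \cref{cor:mono} as an immediate consequence of \cref{thm:ttest} and omits the proof entirely, and your chain of monotonicity facts (the map $|r_{n-1}| \mapsto |t_{n-1}|$ is increasing on $[0,1)$, $F_t$ is an increasing CDF, and $p_{n-1} = 2\left(1 - F_t(|t_{n-1}|, \text{df}_{n-1})\right)$ reverses orientation) is exactly the standard justification being left implicit. Your flagged domain caveats ($|r_{n-1}| < 1$ and $n - 3 > 0$) are reasonable additions that the paper does not bother to state.
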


In online settings, we consider the effect of adding a new point $\left( x_n, y_n \right)$ to $D$ on the PCC and its p-value, with an interest in the robustness of the PCC to the worst case. We formalize this as follows.
\begin{definition}
\label{def:primary}
	We call
	\begin{equation}
		\Delta_1 r \left( D, \mathcal F \right) := \max_{\left( x_n, y_n \right) \in \mathcal F} \lvert r \left(  D \right) - r \left(  D \cup \left\{ \left( x_n, y_n \right) \right\} \right) \rvert
	\end{equation}
	and
	\begin{equation}
		\Delta_1 p \left( D, \mathcal F \right) := \max_{\left( x_n, y_n \right) \in \mathcal F} \lvert p \left(  D \right) - p \left(  D \cup \left\{ \left( x_n, y_n \right) \right\} \right) \rvert
	\end{equation}
	the primary sensitivities of $D$ within $\mathcal F$.
\end{definition}

For the more general case of adding up to $k$ points to $D$, we define the $k$-ary sensitivity, to which we will also return briefly in \cref{sec:disc}.
\begin{definition}
\label{def:kary}
	We call
	\begin{equation}
		\Delta_k r \left( D, \mathcal F \right) := \max_{D^\prime \in \mathcal F^k} \lvert r \left(  D \right) - r \left(  D \cup D^\prime \right) \rvert
	\end{equation}
	and
	\begin{equation}
		\Delta_k p \left( D, \mathcal F \right) := \max_{D^\prime \in \mathcal F^k} \lvert p \left(  D \right) - p \left(  D \cup D^\prime \right) \rvert
	\end{equation}
	the $k$-ary sensitivities of $D$ within $\mathcal F$.
\end{definition}

Drawing from Welford's online algorithm for variance and covariance, we leverage the following recurrence relations between $D$ and subsequent data points \cite{welford1962note}.
\begin{theorem}[Welford's recurrence relations]
\label{thm:welford}
	Given a dataset $D = \{(x_i, y_i)\}_{i=1}^{n-1}$ and new point $\left( x_n, y_n \right)$, let $s_{x, n-1}$ and $\bar x_{n-1}$ be the biased sample standard deviation and sample mean, respectively, of $\left\{ x_1, x_2, \ldots, x_{n-1} \right\}$ and $s_{xy, n-1}$ be the biased sample covariance of $D$. Let $s_{x, n}$ and $\bar x_n$ be the biased sample standard deviation and sample mean, respectively, of $\left\{ x_1, x_2, \ldots, x_{n-1}, x_n \right\}$ and let $s_{xy, n}$ be the biased sample covariance of $D \cup \left\{ \left(x_n, y_n \right) \right\}$. Then
\begin{equation}
	\bar x_n = \frac{(n-1) \bar x_{n-1} + x_n } {n},
\end{equation}
\begin{equation}
	s^2_{x, n} = \frac{ (n-1) s^2_{x, n-1} + (x_{n+1} - \bar x_{n-1})(x_{n+1} - \bar x_n) }{n},
\end{equation}
and
\begin{equation}
	s_{xy, n} = \frac{n-1}{n} \left( s_{xy, n-1} + \frac{(x_n - \bar x_{n-1}) (y_n - \bar y_{n-1})}{n} \right).
\end{equation}
\end{theorem}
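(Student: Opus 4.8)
The plan is to prove each of the three recurrences by returning to the definitions of the sample mean, biased variance, and biased covariance and tracking how the underlying sums change when the single point $(x_n, y_n)$ is appended to $D$. The mean identity is immediate: writing $\bar x_n = \frac{1}{n}\sum_{i=1}^{n} x_i$ and splitting off the final term gives $\bar x_n = \frac{1}{n}\left((n-1)\bar x_{n-1} + x_n\right)$, and symmetrically for $\bar y_n$. I would record at the outset the algebraically convenient rearrangement $x_n = n\bar x_n - (n-1)\bar x_{n-1}$ (and its $y$-analogue), since this substitution is what drives the remaining two derivations.

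For the variance and covariance I would avoid expanding the squared deviations directly and instead work with the unnormalized moments $M_{2,n} := n\, s^2_{x,n} = \sum_{i=1}^{n}(x_i - \bar x_n)^2$ and $C_n := n\, s_{xy,n} = \sum_{i=1}^{n}(x_i - \bar x_n)(y_i - \bar y_n)$. The first step is to rewrite each in raw-moment form via the standard identities
\begin{equation}
	M_{2,n} = \sum_{i=1}^{n} x_i^2 - n\,\bar x_n^2, \qquad C_n = \sum_{i=1}^{n} x_i y_i - n\,\bar x_n \bar y_n,
\end{equation}
so that the raw sums $\sum x_i^2$ and $\sum x_i y_i$ telescope cleanly and the one-step increments collapse to $M_{2,n} - M_{2,n-1} = x_n^2 - n\bar x_n^2 + (n-1)\bar x_{n-1}^2$ and $C_n - C_{n-1} = x_n y_n - n\bar x_n\bar y_n + (n-1)\bar x_{n-1}\bar y_{n-1}$.

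The heart of the argument is then purely algebraic: substitute the mean recurrence into these increments and verify that the cross terms cancel. For the variance I expect $M_{2,n} - M_{2,n-1}$ to reduce to $(x_n - \bar x_{n-1})(x_n - \bar x_n)$; dividing by $n$ and writing $M_{2,n-1} = (n-1)s^2_{x,n-1}$ then recovers the stated formula. For the covariance, after eliminating $\bar x_n$ and $\bar y_n$ in favor of the previous means, the coefficient of $\bar x_{n-1}\bar y_{n-1}$ simplifies through $-(n-1)^2 + n(n-1) = n-1$, and factoring this out leaves exactly $\frac{n-1}{n}(x_n - \bar x_{n-1})(y_n - \bar y_{n-1})$; dividing by $n$ yields the claimed relation together with its $\frac{n-1}{n}$ prefactor.

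The main obstacle here is bookkeeping rather than anything conceptual: in the covariance case the increment does \emph{not} reduce to the symmetric product $(x_n - \bar x_{n-1})(y_n - \bar y_n)$ that one might naively anticipate by analogy with the variance, but instead acquires the extra factor $\frac{n-1}{n}$. Pinning this constant down correctly requires carefully tracking the coefficients of the mixed $x_n\bar y_{n-1}$ and $\bar x_{n-1} y_n$ terms after substitution, and it is the one place where a stray sign or index shift would propagate directly into the final statement.
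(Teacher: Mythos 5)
The paper does not actually prove this theorem: it is imported verbatim from Welford's 1962 note, with only a citation, so there is no in-paper argument to compare against. Your derivation is correct and is the standard one: the mean identity by splitting off the last summand, and the variance/covariance recurrences by writing the centered sums $M_{2,n} = \sum_i x_i^2 - n\bar x_n^2$ and $C_n = \sum_i x_i y_i - n\bar x_n\bar y_n$, telescoping, and substituting the mean recurrence; I have checked that the increments reduce to $(x_n-\bar x_{n-1})(x_n-\bar x_n)$ and $\tfrac{n-1}{n}(x_n-\bar x_{n-1})(y_n-\bar y_{n-1})$ exactly as you claim, and that dividing by $n$ recovers the stated forms (the $x_{n+1}$ appearing in the paper's variance formula is a typo for $x_n$, which your proof silently and correctly repairs). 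One small correction to your closing remark: the covariance increment \emph{does} admit the mixed-mean form $(x_n-\bar x_{n-1})(y_n-\bar y_n)$, in perfect analogy with the variance case, since $y_n - \bar y_n = \tfrac{n-1}{n}(y_n-\bar y_{n-1})$; the factor $\tfrac{n-1}{n}$ only becomes explicit when you insist on writing the increment symmetrically in the two \emph{old} means, as the theorem's statement does. This does not affect the validity of your argument, but the asymmetry you warn about is a feature of the chosen normalization rather than of the recurrence itself.
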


We will also find that the best linear unbiased estimators (BLUE) for the regression line for a dataset $D$ play a role in proving our key theorems \cite{casella2024statistical}. The BLUE parameters can be related to the statistical moments computed by Welford via the Gauss-Markov theorem \cite{casella2024statistical}.
\begin{theorem}[Gauss-Markov theorem]
\label{thm:gauss}
Fix a dataset $D = \{(x_i, y_i)\}_{i=1}^{n-1}$. Given a model function
\begin{equation}
	y_i = \alpha_{x, n-1} + \beta_{x, n-1} x_i
\end{equation}
regressing $\left\{ y_1, y_2, \ldots, y_{n-1} \right\}$ onto $\left\{ x_1, x_2, \ldots, x_{n-1} \right\}$, the least-squares line
is given by
\begin{equation}
	\hat \beta_{x, n-1} = \frac{s_{xy,n-1}}{s_{x,n-1}^2} \quad \text{ and } \quad \hat \alpha_{x, n-1} = \bar y_{n-1} - \beta_{x, n-1} \bar x_{n-1}.
\end{equation}
\end{theorem}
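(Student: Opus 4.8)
The plan is to establish the displayed formulas directly as the unique minimizer of the ordinary least-squares objective, since the asserted ``least-squares line'' is by definition the pair $\left( \hat\alpha_{x, n-1}, \hat\beta_{x, n-1} \right)$ minimizing the residual sum of squares. First I would define
\begin{equation}
	S(\alpha, \beta) := \sum_{i=1}^{n-1} \left( y_i - \alpha - \beta x_i \right)^2
\end{equation}
and observe that $S$ is a convex quadratic in $(\alpha, \beta)$, so that any stationary point is automatically a global minimizer. This reduces the theorem to solving the first-order conditions.

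Next I would compute the two partial derivatives and set them to zero, obtaining the normal equations. Differentiating in $\alpha$ yields $\sum_{i=1}^{n-1} \left( y_i - \alpha - \beta x_i \right) = 0$, which rearranges to $\bar y_{n-1} = \alpha + \beta \bar x_{n-1}$ and hence immediately to $\hat\alpha_{x, n-1} = \bar y_{n-1} - \hat\beta_{x, n-1} \bar x_{n-1}$, matching the claimed intercept. Differentiating in $\beta$ yields $\sum_{i=1}^{n-1} x_i \left( y_i - \alpha - \beta x_i \right) = 0$; substituting the expression just found for $\hat\alpha_{x, n-1}$ eliminates the intercept and leaves a single linear equation in $\hat\beta_{x, n-1}$.

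The remaining step is to recognize the resulting ratio as the stated quotient of Welford's quantities. Here I would apply the mean-centering identities $\sum_i x_i y_i - (n-1) \bar x_{n-1} \bar y_{n-1} = \sum_i \left( x_i - \bar x_{n-1} \right)\left( y_i - \bar y_{n-1} \right)$ and $\sum_i x_i^2 - (n-1) \bar x_{n-1}^2 = \sum_i \left( x_i - \bar x_{n-1} \right)^2$ to rewrite the slope as
\begin{equation}
	\hat\beta_{x, n-1} = \frac{\sum_{i=1}^{n-1} \left( x_i - \bar x_{n-1} \right)\left( y_i - \bar y_{n-1} \right)}{\sum_{i=1}^{n-1} \left( x_i - \bar x_{n-1} \right)^2}.
\end{equation}
Dividing numerator and denominator by $n-1$ and invoking the definitions of the biased sample covariance $s_{xy, n-1}$ and biased sample variance $s_{x, n-1}^2$ --- noting that $|D| = n-1$, so both use the same normalization --- gives $\hat\beta_{x, n-1} = s_{xy, n-1} / s_{x, n-1}^2$, as required.

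I do not anticipate a genuine obstacle, as the argument is elementary calculus and algebra; the points demanding care are purely bookkeeping. In particular, I must track the biased normalization consistently so that the $1/(n-1)$ factors cancel cleanly, and I should confirm that the stationary point is a true minimum, which follows because the Hessian of $S$ is positive definite whenever $s_{x, n-1}^2 > 0$ (equivalently, whenever the $x_i$ are not all equal, which is precisely the regime in which $\hat\beta_{x, n-1}$ is well defined). I would also remark that, although the statement is labeled the Gauss-Markov theorem, the displayed result is only the closed form of the least-squares estimators; the full Gauss-Markov conclusion that these estimators are best linear unbiased is not needed for our subsequent arguments and would be established separately under the usual noise assumptions.
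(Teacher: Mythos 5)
Your derivation is correct, but note that the paper offers no proof of this statement at all: it is presented as a classical result and delegated to the cited reference \cite{casella2024statistical}, so there is no in-paper argument to compare against. Your proposal is the standard normal-equations derivation, and every step checks out --- the convexity of $S(\alpha,\beta)$, the first-order conditions, the mean-centering identities, and the observation that the $1/(n-1)$ normalization cancels in the ratio so that the biased moments $s_{xy,n-1}/s_{x,n-1}^2$ give the same slope as the unbiased ones. Your closing remark is also well taken: what the paper labels the ``Gauss--Markov theorem'' is only the closed form of the ordinary least-squares estimators, not the BLUE optimality claim, and only the closed form is used in \cref{sec:res} (via \cref{prop:grad0} and the definition of $\mathcal I_n$), so nothing downstream depends on the stronger statement.
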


As is often noted in discussions of Gauss-Markov, we observe the following fact, which we formalize as a proposition for ease of reference in \cref{sec:res}.
\begin{proposition}
\label{prop:blueinter}
The sets $\mathcal G_{x,n-1}, \mathcal G_{y, n-1} \subset \mathcal F$ lying on the BLUE lines intersect at the point $(\bar x_{n-1}, \bar y_{n-1})$.
\end{proposition}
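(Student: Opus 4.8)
The plan is to exhibit the centroid $(\bar{x}_{n-1}, \bar{y}_{n-1})$ explicitly as a common point of the two BLUE lines and then to confirm that it lies inside $\mathcal{F}$, so that it belongs to $\mathcal{G}_{x,n-1} \cap \mathcal{G}_{y,n-1}$. First I would write out both regression lines using \Cref{thm:gauss}: the line regressing $y$ onto $x$ is $y = \hat{\alpha}_{x,n-1} + \hat{\beta}_{x,n-1} x$ with intercept $\hat{\alpha}_{x,n-1} = \bar{y}_{n-1} - \hat{\beta}_{x,n-1}\bar{x}_{n-1}$, and by the symmetric form of Gauss--Markov the line regressing $x$ onto $y$ is $x = \hat{\alpha}_{y,n-1} + \hat{\beta}_{y,n-1} y$ with $\hat{\alpha}_{y,n-1} = \bar{x}_{n-1} - \hat{\beta}_{y,n-1}\bar{y}_{n-1}$.

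The core of the argument is a one-line substitution into each equation. Plugging $x = \bar{x}_{n-1}$ into the first line and using the intercept formula gives
\begin{equation}
	\hat{\alpha}_{x,n-1} + \hat{\beta}_{x,n-1}\bar{x}_{n-1} = \bar{y}_{n-1} - \hat{\beta}_{x,n-1}\bar{x}_{n-1} + \hat{\beta}_{x,n-1}\bar{x}_{n-1} = \bar{y}_{n-1},
\end{equation}
so the centroid lies on $\mathcal{G}_{x,n-1}$; the mirror-image substitution $y = \bar{y}_{n-1}$ into the second line yields $x = \bar{x}_{n-1}$, placing the centroid on $\mathcal{G}_{y,n-1}$ as well. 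It then remains to check membership in $\mathcal{F}$: since each sample mean is a convex combination of the coordinates of $D$, we have $\ell_x \le \bar{x}_{n-1} \le u_x$ and $\ell_y \le \bar{y}_{n-1} \le u_y$, hence $(\bar{x}_{n-1}, \bar{y}_{n-1}) \in \mathcal{F}$ and therefore lies in the restricted sets $\mathcal{G}_{x,n-1}, \mathcal{G}_{y,n-1}$. To justify calling this \emph{the} intersection point, I would observe that when $0 < |r_{n-1}| < 1$ the two lines have distinct finite slopes and so meet in a unique point, which must then be the centroid.

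The main obstacle is handling degenerate configurations rather than the generic computation, which is essentially immediate. The slope formulas $\hat{\beta}_{x,n-1} = s_{xy,n-1}/s_{x,n-1}^2$ and $\hat{\beta}_{y,n-1} = s_{xy,n-1}/s_{y,n-1}^2$ are undefined when $s_{x,n-1} = 0$ or $s_{y,n-1} = 0$ (a coordinate that is constant across $D$), and the two lines coincide when $|r_{n-1}| = 1$. I would dispatch these cases separately: in the perfectly collinear case the two lines are identical and pass through the centroid, so the incidence still holds (with the intersection being the entire shared line); in the constant-coordinate case I would interpret the corresponding BLUE ``line'' as the vertical or horizontal line through the centroid and verify incidence directly. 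Under the standing nondegeneracy that the later sections implicitly require for the PCC to be well defined and satisfy $|r_{n-1}| < 1$, the generic substitution argument above suffices and the proof is complete.
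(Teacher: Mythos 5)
Your proof is correct. The paper states this proposition without any proof at all, treating it as a well-known consequence of the Gauss-Markov intercept formulas ($\hat \alpha = \bar y - \hat \beta \bar x$), and your substitution of the centroid into each line is exactly the standard justification; the additional attention you give to degenerate configurations ($\lvert r_{n-1} \rvert = 1$, which the paper isolates separately in \cref{prop:bluesame}, and a vanishing coordinate variance) goes beyond what the paper requires but does not detract from the argument.
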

Likewise, we formally state another trivial property of linear regression.
\begin{proposition}
\label{prop:bluesame}
$\mathcal G_{x,n-1} = \mathcal G_{y, n-1}$ if and only if $r \left( D \right) = \pm 1$.
\end{proposition}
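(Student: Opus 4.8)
The plan is to reduce the set-equality $\mathcal{G}_{x,n-1} = \mathcal{G}_{y,n-1}$ to an equality of the two underlying regression lines, and then to translate that into a condition on the slopes that collapses to $r(D) = \pm 1$. First I would record the two slopes from \cref{thm:gauss}: regressing $y$ onto $x$ gives $\hat\beta_{x,n-1} = s_{xy,n-1}/s_{x,n-1}^2$, and by the symmetric roles of the coordinates, regressing $x$ onto $y$ gives $\hat\beta_{y,n-1} = s_{xy,n-1}/s_{y,n-1}^2$. Writing each line in the ambient plane, the $y$-on-$x$ line carries direction vector $(s_{x,n-1}^2, s_{xy,n-1})$ and the $x$-on-$y$ line carries direction vector $(s_{xy,n-1}, s_{y,n-1}^2)$; both pass through the centroid $(\bar x_{n-1}, \bar y_{n-1})$ by \cref{prop:blueinter}.

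Next I would argue that the two segments coincide as subsets of $\mathcal{F}$ exactly when the two full lines coincide. The forward implication here is the delicate one: a priori two distinct lines could have equal---possibly trivial---intersections with $\mathcal{F}$. I would rule this out by noting that $s_{x,n-1}, s_{y,n-1} > 0$ (otherwise $r(D)$ is undefined) forces $\ell_x < \bar x_{n-1} < u_x$ and $\ell_y < \bar y_{n-1} < u_y$: indeed, $\bar x_{n-1} = \ell_x$ would require every $x_i = \ell_x$ and hence $s_{x,n-1} = 0$. Thus the common centroid lies in $\text{int}(\mathcal F)$, so each line meets $\mathcal F$ in a segment of positive length through that interior point, and equality of these two segments then forces equality of the lines.

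Finally, two lines through a common point coincide if and only if their direction vectors are parallel, i.e. if and only if
\begin{equation}
    \det \begin{pmatrix} s_{x,n-1}^2 & s_{xy,n-1} \\ s_{xy,n-1} & s_{y,n-1}^2 \end{pmatrix} = s_{x,n-1}^2 s_{y,n-1}^2 - s_{xy,n-1}^2 = 0,
\end{equation}
which is equivalent to $r(D)^2 = s_{xy,n-1}^2 / (s_{x,n-1}^2 s_{y,n-1}^2) = 1$, i.e. $r(D) = \pm 1$. I expect the only genuine obstacle to be the topological step in the middle---namely justifying that equality of the clipped segments upgrades to equality of the lines---whereas the determinant computation is routine and, because it avoids dividing by $s_{xy,n-1}$, cleanly covers the degenerate case $s_{xy,n-1} = 0$, where the two lines are orthogonal and $r(D) = 0 \neq \pm 1$.
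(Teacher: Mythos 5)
Your argument is correct, but there is nothing in the paper to compare it against: the paper states \cref{prop:bluesame} as a ``trivial property of linear regression'' and supplies no proof at all (the same is true of \cref{prop:blueinter}). Your write-up therefore fills a genuine gap rather than paralleling an existing argument. The skeleton --- common centroid by \cref{prop:blueinter}, direction vectors $(s_{x,n-1}^2, s_{xy,n-1})$ and $(s_{xy,n-1}, s_{y,n-1}^2)$, coincidence of the lines if and only if the determinant $s_{x,n-1}^2 s_{y,n-1}^2 - s_{xy,n-1}^2$ vanishes, i.e.\ $r(D)^2 = 1$ --- is the standard computation, and your choice to test parallelism via the determinant rather than equating slopes correctly sidesteps the $s_{xy,n-1}=0$ case. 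The step you flag as delicate is the one place you are being more careful than the paper: since $\mathcal G_{x,n-1}$ and $\mathcal G_{y,n-1}$ are defined as subsets of $\mathcal F$, one must rule out two distinct lines having equal traces in $\mathcal F$, and your observation that $s_{x,n-1}, s_{y,n-1} > 0$ forces $(\bar x_{n-1}, \bar y_{n-1}) \in \text{int}(\mathcal F)$ (the mean can sit on a bound only if all coordinates do, killing the variance) does exactly that: two distinct lines share at most one point, so their intersections with $\mathcal F$, each a segment of positive length through the interior centroid, cannot coincide. The proof is complete as written.
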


As to primary sensitivity itself, the following points will play a critical role.
\begin{definition}
\label{def:corner}
	We refer to the following set as corner points:
	\begin{equation}
		\mathcal{C} := \left\{
		\begin{aligned}
			C_{\ell, \ell} &:= \left( \ell_x, \ell_y \right), \\
			C_{\ell, u} &:= \left( \ell_x, u_y \right), \\
			C_{u, \ell} &:= \left( u_x, \ell_y \right), \\
			C_{u, u} &:= \left( u_x, u_y \right)
		\end{aligned}
		\right\}.
	\end{equation}
\end{definition}

\begin{definition}
\label{def:inter}
We refer to the following set $\mathcal{I}_n$\footnote{Note that $\mathcal{I}_n$ will vary depending on the value of $D$ and therefore evolve accordingly. Hence, we include the subscript $n$ to specify the dataset to which it belongs.} as intersection points:
\begin{equation}
\mathcal{I}_n := \left\{
\begin{aligned}
    I_{x, \ell, n-1} &:= \left( \hat{\beta}_{x, n}^{-1} \left( \ell_y - \hat{\alpha}_{x, n} \right), \ell_y \right), \\
    I_{x, u, n-1} &:= \left( \hat{\beta}_{x, n}^{-1} \left( u_y - \hat{\alpha}_{x, n} \right), u_y \right), \\
    I_{y, \ell, n-1} &:= \left( \ell_x, \hat{\beta}_{y, n}^{-1} \left( \ell_y - \hat{\alpha}_{y, n} \right) \right), \\
    I_{y, u, n-1} &:= \left( u_x, \hat{\beta}_{y, n}^{-1} \left( u_x - \hat{\alpha}_{y, n} \right) \right)
\end{aligned}
\right\}.
\end{equation}
\end{definition}

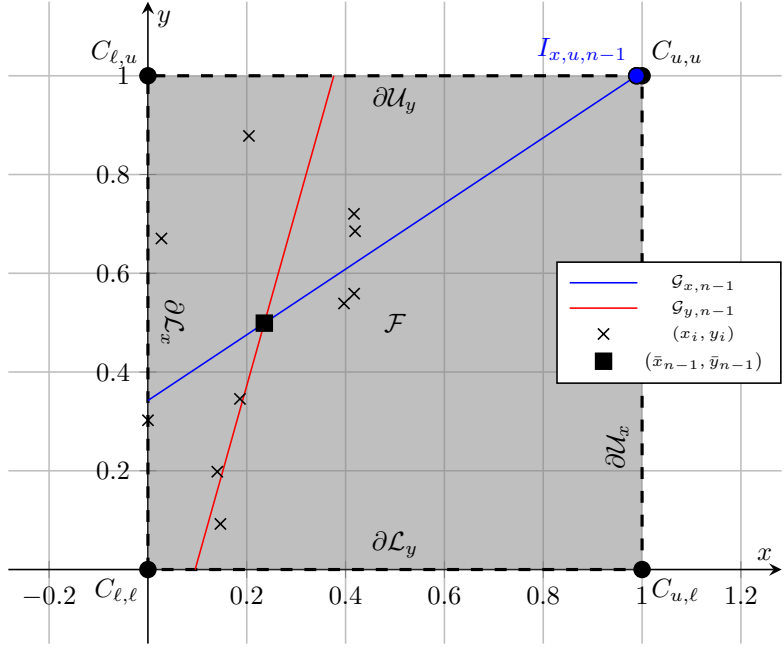
\begin{figure}[t]
    \centering
		\begin{tikzpicture}[scale=1.5, every node/.append style={scale=0.66667}]
			\pgfmathsetmacro{\n}{10}
			\pgfmathsetmacro{\xprevbar}{0.23556460542404717}
			\pgfmathsetmacro{\yprevbar}{0.4989968886385938}
			\pgfmathsetmacro{\sxprev}{0.024478175283532327}
			\pgfmathsetmacro{\syprev}{0.058007671902636324}
			\pgfmathsetmacro{\sxyprev}{0.016275673571944073}
			\pgfmathsetmacro{\rprev}{0.4319233997986131}

			\pgfmathsetmacro\betax{\sxyprev / \sxprev}
			\pgfmathsetmacro\alphax{\yprevbar - \betax*\xprevbar}
			\pgfmathsetmacro\Ixux{(1 - \alphax) / \betax}
			\pgfmathsetmacro\Ixlx{(0 - \alphax) / \betax}

			\pgfmathsetmacro\betay{\sxyprev / \syprev}
			\pgfmathsetmacro\alphay{\xprevbar - \betay*\yprevbar}
			\pgfmathsetmacro\Iyuy{(1 - \alphay) / \betay}
			\pgfmathsetmacro\Iyly{(0 - \alphay) / \betay}

			\pgfmathsetmacro\xminx{max(0, (0 - \alphax) / \betax)}
			\pgfmathsetmacro\xmaxx{min(1, (1 - \alphax) / \betax)}

			\pgfmathsetmacro\yminy{max(0, (0 - \alphay) / \betay)}
			\pgfmathsetmacro\ymaxy{min(1, (1 - \alphay) / \betay)}

			\begin{axis}[
				axis equal,
				xlabel={$x$},
				ylabel={$y$},
				axis x line=middle,
				axis y line=middle,
				samples=40,
				xmin=-0.15, xmax=1.15,
				ymin=-0.15, ymax=1.15,
				legend style={anchor=east, at={(1,0.5)}},
				grid,
			]
				\draw[thick, dashed, fill=gray, fill opacity=0.5] (0, 0) rectangle (1, 1);

				\addplot[-,
						domain=\xminx:\xmaxx,
						smooth,
						variable=\x,
						blue
						]
						({\x}, {\betax*\x + \alphax});
				\addlegendentry{\tiny $\mathcal G_{x,n-1}$}
				\addplot[-,
						 domain=\yminy:\ymaxy,
						 smooth,
						 variable=\y,
						 red
						 ]
						 ({\betay*\y + \alphay}, {\y});
				\addlegendentry{\tiny $\mathcal G_{y,n-1}$}

				\addplot[only marks, mark=x] coordinates {
					(0.417022004702574,0.7203244934421581)
					(0.00011437481734488664,0.30233257263183977)
					(0.14675589081711304,0.0923385947687978)
					(0.1862602113776709,0.34556072704304774)
					(0.39676747423066994,0.538816734003357)
					(0.4191945144032948,0.6852195003967595)
					(0.20445224973151743,0.8781174363909454)
					(0.027387593197926163,0.6704675101784022)
					(0.41730480236712697,0.5586898284457517)
					(0.14038693859523377,0.1981014890848788)
				};
				\addlegendentry{\tiny $\left(x_i, y_i \right)$}

				\addplot[only marks, mark=square*] coordinates {
					(\xprevbar,\yprevbar)
				};
				\addlegendentry{\tiny $\left(\bar x_{n-1}, \bar y_{n-1} \right)$}

				\addplot[only marks, mark=*, mark options={fill=blue}] coordinates {(\Ixux,1)};

				\addplot[mark=*] coordinates {(0,1)};
				\node[above left] at (0.0,1.0) {$C_{\ell,u}$};

				\addplot[mark=*] coordinates {(1,1)};
				\node[above right] at (1.00,1.00) {$C_{u,u}$};

				\addplot[mark=*] coordinates {(0,0)};
				\node[below left] at (0.00,0) {$C_{\ell,\ell}$};

				\addplot[mark=*] coordinates {(1,0)};
				\node[below right] at (1,.0) {$C_{u,\ell}$};

				\addplot[mark=*, mark options={fill=blue}] coordinates {(\Ixux,1)};
				\node[left, blue] at (\Ixux,1.05) {$I_{x,u,n-1}$};

				\node[] at (0.5,0.5) {$\mathcal F$};
				\node[] at (0.5,0.95) {$\partial \mathcal U_y$};
				\node[rotate=90] at (0.95,0.25) {$\partial \mathcal U_x$};
				\node[rotate=-90] at (0.05, 0.5) {$\partial \mathcal L_x$};
				\node[] at (0.5, 0.05) {$\partial \mathcal L_y$};

			\end{axis}
		\end{tikzpicture}
		\caption{An illustration of the geometry of $\mathcal F$, $D$, and the intersection and corner points. In this example, only $I_{x,u,n-1} \in \mathcal F$.}
\label{fig:2d}
\end{figure}

\section{Primary sensitivity}
\label{sec:res}

In the following section, we leverage Welford \cite{casella2024statistical} to derive a closed-form function mapping from new points $\left( x_n, y_n \right)$ to the quantity $r \left( D \cup \left\{ \left( x_n, y_n \right) \right\} \right)$. By studying the solutions to this function, we will then be able to trivially compute $\Delta_1 r$ and $\Delta_1 p$.

Substituting Welford's identities into \cref{def:pcc}, simplifying, and then rewriting the correlation as a function $r_n : \mathcal F \to \left[-1, 1 \right]$, we have
\begin{align}
	r_{n}(x_n, y_n) = \frac{n r_{n-1} s_{x, n-1} s_{y, n-1} +  \left( x_n - \bar x_{n-1} \right)  \left( y_n - \bar y_{n-1} \right) }{\sqrt{ \left( n s_{x,n-1}^2 +  \left( x_n - \bar x_{n-1} \right)^2 \right) \left( n s_{y,n-1}^2 +  \left( y_n - \bar y_{n-1} \right)^2 \right)}}.
\end{align}
For brevity, we let $s_x = s_{x, n-1}$, $s_y = s_{y, n-1}$, $\Delta x = x_n - \bar x_{n-1}$, $\Delta y = y_n - \bar y_{n-1}$, and $r = r_{n-1}$, then observe that we have the following derivatives of $r_n$:
\begin{equation}
\begin{cases}
	\frac{\partial r_n}{\partial x_n}  &= \frac{n s_{x} \left(s_{x} \Delta y - r s_{y} \Delta x \right)}{\left(n s_{x}^{2} + \Delta x^{2}\right)^{\frac{3}{2}} \left(n s_{y}^{2} + \Delta y^{2}\right)^{\frac{1}{2}}} \\
	\frac{\partial r_n}{\partial y_n} &= \frac{n s_{y} \left(s_{y} \Delta x - r s_{x} \Delta y \right)}{\left(n s_{x}^{2} + \Delta x^{2}\right)^{\frac{1}{2}} \left(n s_{y}^{2} + \Delta y^{2}\right)^{\frac{3}{2}}} \\
	\frac{\partial^2 r_n}{\partial x_n^2} &= \frac{n s_{x} \left(2 r s_{y} \Delta x^{2} - n r s_{x}^{2} s_{y} - 3 s_{x} \Delta x \Delta y\right)}{\left(n s_{x}^{2} + \Delta x^{2}\right)^{\frac{5}{2}} \left(n s_{y}^{2} + \Delta y^{2}\right)^{\frac{1}{2}}} \\ 
	\frac{\partial^2 r_n}{\partial y_n^2} &= \frac{n s_{y} \left(2 r s_{x} \Delta y^{2} - n r s_{x} s_{y}^{2} - 3 s_{y} \Delta x \Delta y\right)}{ \left(n s_{x}^{2} + \Delta x^{2}\right)^{\frac{1}{2}} \left(n s_{y}^{2} + \Delta y^{2}\right)^{\frac{5}{2}}} \\ 
	\frac{\partial^2 r_n}{\partial x_n \partial y_n} &= \frac{n s_{x} s_{y} \left(n s_{x} s_{y} + r \Delta x \Delta y\right)}{  \left(n s_{x}^{2} + \Delta x^{2}\right)^{\frac{3}{2}} \left(n s_{y}^{2} + \Delta y^{2}\right)^{\frac{3}{2}}}
\end{cases}.
\end{equation}

\begin{proposition}
\label{prop:grad0}
For all $\left( x_n, y_n \right) \in \mathcal F$, $\frac{\partial r_n}{\partial x_n} = 0$ $\left[\frac{\partial r_n}{\partial y_n} = 0\right]$ if and only if $\left( x_n, y_n \right) \in \mathcal G_{x, n-1}$ $\left[ \left( x_n, y_n \right) \in \mathcal G_{y, n-1} \right]$.
\end{proposition}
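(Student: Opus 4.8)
The plan is to reduce the vanishing of the partial derivative to a single linear factor and then to recognize that factor as the defining equation of the BLUE line. First I would observe that in the closed form for $\frac{\partial r_n}{\partial x_n}$ stated above, the denominator $(n s_x^2 + \Delta x^2)^{3/2}(n s_y^2 + \Delta y^2)^{1/2}$ is strictly positive and the prefactor $n s_x$ is strictly positive, under the standing nondegeneracy assumption $s_x, s_y > 0$ (without which the PCC is undefined). Consequently $\frac{\partial r_n}{\partial x_n} = 0$ if and only if the remaining factor vanishes, that is, $s_x \Delta y - r s_y \Delta x = 0$.

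Next I would substitute the identity $r = r_{n-1} = s_{xy,n-1}/(s_x s_y)$ into this factor. Dividing through by $s_x > 0$ turns the condition into $\Delta y = (s_{xy,n-1}/s_x^2)\,\Delta x$, and invoking \cref{thm:gauss} to write $\hat\beta_{x,n-1} = s_{xy,n-1}/s_x^2$, this reads $\Delta y = \hat\beta_{x,n-1}\,\Delta x$. Unwinding $\Delta x = x_n - \bar x_{n-1}$ and $\Delta y = y_n - \bar y_{n-1}$, and using $\hat\alpha_{x,n-1} = \bar y_{n-1} - \hat\beta_{x,n-1}\bar x_{n-1}$ from the same theorem, the condition becomes $y_n = \hat\alpha_{x,n-1} + \hat\beta_{x,n-1}\, x_n$, which is exactly the equation of the least-squares line regressing $y$ onto $x$. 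Since $(x_n, y_n)$ was assumed to lie in $\mathcal F$, this is precisely the statement that $(x_n, y_n) \in \mathcal G_{x,n-1}$; and because every step in this chain is an equivalence, both directions of the ``if and only if'' follow at once.

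The bracketed claim for $\frac{\partial r_n}{\partial y_n}$ and $\mathcal G_{y,n-1}$ follows by the identical argument with the roles of $x$ and $y$ interchanged, using $\hat\beta_{y,n-1} = s_{xy,n-1}/s_y^2$. I expect the only real subtlety to be the bookkeeping around nondegeneracy: I must state explicitly that $s_x, s_y > 0$, so that both the positivity of the denominator and the division by $s_x$ are legitimate, and I should confirm that $\mathcal G_{x,n-1}$ is understood as the feasible locus of the BLUE equation (its intersection with $\mathcal F$), so that, for a point already known to lie in $\mathcal F$, membership in $\mathcal G_{x,n-1}$ is genuinely equivalent to satisfying the line equation. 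Beyond this, no delicate analysis is required; the whole content is the algebraic collapse of the gradient's numerator onto the regression line.
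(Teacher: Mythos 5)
Your proposal is correct and follows essentially the same route as the paper's own proof: isolate the linear factor $s_x \Delta y - r s_y \Delta x$ in the numerator of $\partial r_n / \partial x_n$, rewrite its vanishing via $\hat\beta_{x,n-1} = s_{xy,n-1}/s_{x,n-1}^2$ and $\hat\alpha_{x,n-1}$ as the equation of the BLUE line, and note that every step is reversible. Your explicit attention to the nondegeneracy condition $s_x, s_y > 0$ is a small tidiness improvement over the paper, which leaves that implicit.
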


\begin{proof}
	It suffices to prove our proposition for $x_n$. If $\frac{\partial r_n}{\partial x_n} = 0$, we must have
	\begin{equation}
		s_{x, n-1} \left( y_n - \bar y_{n-1} \right) - r_{n-1} s_{y,n-1} \left( x_n - \bar x_{n-1} \right) = 0,
	\end{equation}
	meaning
	\begin{align*}
		y_n &= r_{n-1} \frac{s_{y, n-1}}{s_{x, n-1}} \left( x_n - \bar x_{n-1} \right) + \bar y_{n-1} \\
			&= \hat \beta_{x, n-1} \cancel{\frac{s_{x, n-1}}{s_{y, n-1}}} \cancel{\frac{s_{y, n-1}}{s_{x, n-1}}}\left( x_n - \bar x_{n-1} \right) + \bar y_{n-1} \\
			&= \hat \beta_{x, n-1} x_n - \hat \beta_{x, n-1} \bar x_{n-1} + \bar y_{n-1} \\
			&=  \hat \beta_{x, n-1} x_n + \hat \alpha_{x, n-1}.
	\end{align*}
	Therefore, $\left( x_n, y_n \right) \in \mathcal G_{x, n-1}$. Our proof of the converse proceeds identically.
\end{proof}

For an illustration of the following, the reader is directed to \cref{fig:3d}. 

\begin{proposition}
\label{prop:boundary}
	\begin{equation}
		\argmax_{\mathcal F} r_n, \ \argmin_{\mathcal F} r_n \in \partial \mathcal F
	\end{equation}
\end{proposition}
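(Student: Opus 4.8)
The plan is to combine the extreme value theorem with a second-derivative classification of the unique interior critical point. Since $r_n$ is continuous on the compact rectangle $\mathcal F$, it attains both its maximum and its minimum, and each extremum lies either on $\partial \mathcal F$ or at an interior point where $\nabla r_n = \mathbf 0$. It therefore suffices to show that no interior critical point is a local extremum, from which it follows that both $\argmax_{\mathcal F} r_n$ and $\argmin_{\mathcal F} r_n$ must be attained on $\partial \mathcal F$.

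First I would locate the interior critical points. By \cref{prop:grad0}, $\frac{\partial r_n}{\partial x_n} = \frac{\partial r_n}{\partial y_n} = 0$ holds exactly when $\left( x_n, y_n \right) \in \mathcal G_{x, n-1} \cap \mathcal G_{y, n-1}$. In the nondegenerate case $|r_{n-1}| < 1$, \cref{prop:bluesame} gives $\mathcal G_{x, n-1} \neq \mathcal G_{y, n-1}$, while \cref{prop:blueinter} shows these two BLUE lines meet only at the centroid $\left( \bar x_{n-1}, \bar y_{n-1} \right)$. Hence the centroid is the \emph{unique} interior critical point, and the entire argument reduces to classifying it.

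Next I would apply the second-derivative test by evaluating $\mathbf H_{r_n}$ at the centroid, i.e.\ at $\Delta x = \Delta y = 0$. Writing $A = n s_x^2$ and $B = n s_y^2$ for brevity, the second derivatives listed above collapse to $\frac{\partial^2 r_n}{\partial x_n^2} = -n^2 r s_x^3 s_y / \left( A^{5/2} B^{1/2} \right)$, $\frac{\partial^2 r_n}{\partial y_n^2} = -n^2 r s_x s_y^3 / \left( A^{1/2} B^{5/2} \right)$, and $\frac{\partial^2 r_n}{\partial x_n \partial y_n} = n^2 s_x^2 s_y^2 / \left( A^{3/2} B^{3/2} \right)$, so that
\begin{equation}
	\det \mathbf H_{r_n}\!\left( \bar x_{n-1}, \bar y_{n-1} \right) = \frac{n^4 s_x^4 s_y^4}{A^3 B^3} \left( r_{n-1}^2 - 1 \right) < 0,
\end{equation}
where the strict inequality uses $|r_{n-1}| < 1$ together with $s_x, s_y > 0$. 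A negative determinant renders the Hessian indefinite, so the centroid is a saddle point and therefore not a local extremum. Consequently both the maximizer and minimizer lie on $\partial \mathcal F$.

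The only real obstacle is bookkeeping: the Hessian entries must be evaluated and the determinant's sign verified, after which the conceptual payoff—that the sole interior critical point is a saddle rather than an extremum—follows immediately. I would also flag the degenerate case $|r_{n-1}| = 1$ separately, since \cref{prop:bluesame} then makes the whole coincident BLUE line critical; there $r_n$ is maximized along a segment whose endpoints lie on $\partial \mathcal F$ and minimized strictly off that line, so a maximizer and a minimizer still occur on the boundary and the conclusion persists.
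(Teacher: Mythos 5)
Your proposal is correct and follows essentially the same route as the paper: the extreme value theorem plus Fermat's condition, \cref{prop:grad0} to reduce interior critical points to $\mathcal G_{x,n-1} \cap \mathcal G_{y,n-1}$, the case split on whether the BLUE lines coincide, and the sign of $\det \mathbf H_{r_n}$ at the centroid (your determinant simplifies exactly to the paper's $\left( r_{n-1}^2 - 1 \right) / \left( n^2 s_{x,n-1}^2 s_{y,n-1}^2 \right)$). The only cosmetic differences are that you classify the centroid directly as a saddle rather than arguing by contradiction via second-order necessary conditions, and in the degenerate case $|r_{n-1}| = 1$ you assert the extremal behavior along the coincident line where the paper derives constancy of $r_n$ on it via the fundamental theorem of line integrals and the intermediate value theorem.
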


\begin{proof}
	Because $r_n$ is continuous, we observe by the extreme value theorem that it must take extremal values along the closed and bounded set $\mathcal F$ \cite{rudin1964principles}. It therefore suffices to prove the demonstrandum for $\argmax_{\mathcal F} r_n$. By way of contradiction, we suppose that $(x_n^*, y_n^*) \in \text{int} \left( \mathcal F \right)$ for all $(x_n^*, y_n^*) = \argmax r_n$.

	By Fermat's theorem \cite{rudin1964principles}, we then know that we must have the first-order condition $\nabla r_n \left( x_n^*, y_n^* \right) = \mathbf 0.$ Furthermore, by \cref{prop:grad0} and the fact that both $\frac{\partial r_n}{\partial x_n} = 0$ and $\frac{\partial r_n}{\partial y_n} = 0$ at $\left( x_n^*, y_n^* \right)$, we know that $\left( x_n^*, y_n^* \right)$ must be in both $\mathcal G_{x, n-1}$ and $\mathcal G_{y, n-1}$. Because both sets are line segments, they must intersect at either an infinite set of points or a single point. We then have two cases:
	\begin{enumerate}
		\item[(i)] $\mathcal G_{x, n-1} = \mathcal G_{y, n-1}$, or
		\item[(ii)] $\mathcal G_{x, n-1} \cap \mathcal G_{y, n-1} = \left\{ \left( \bar x_{n-1}, \bar y_{n-1} \right) \right\}$,
	\end{enumerate}
	where the second case is due to \cref{prop:blueinter}.

	Suppose that (i). Then we have $\nabla r_n \left( x_n, y_n \right) = \mathbf 0$ for all $\left( x_n, y_n \right) \in \mathcal G_{x, n-1} = \mathcal G_{y, n-1}$, in which case, by the fundamental theorem of line integrals \cite{rudin1964principles}, $\mathcal G_{x, n-1} = \mathcal G_{y, n-1}$ must constitute a smooth curve $U \subset \mathcal F$ along which the function $r_n$ is constant.  However, by the intermediate value theorem \cite{rudin1964principles}, we observe that, because $U$ is continuous on $\mathbb R$, it must intersect the boundary at some point $\left( x_n^\prime, y_n^\prime \right) \in \partial \mathcal F$. But because $r_n$ is constant on $U$, we must have $\left( x_n, y_n \right) = \argmax_{\mathcal F} r_n$ for all $\left( x_n, y_n \right) \in U$, including for $(x_n^\prime, y_n^\prime) \in \partial \mathcal F$, which contradicts our assumption that all maxima lie in the interior. \footnote{We note that the comparative complexity of our proof for the first case relative to that for the second is due to the inconclusiveness of the second-derivative test.}

	We must instead have (ii), in which case $\left( x_n^*, y_n^* \right) = \left( \bar x_{n-1}, \bar y_{n-1} \right)$. Because it is a local maximum, second-order optimality conditions must hold \cite{boyd2004convex}, i.e., we must have 
	\begin{equation}
		\text{det} \left( \mathbf H_{r_n} \left(x_n^*, y_n^* \right) \right) = \frac{r_{n-1}^2 -1}{n^2 s_{x,n-1}^2 s_{y,n-1}^2} \geq 0.
	\end{equation}
	But by the contrapositive of \cref{prop:bluesame}, we know that $r_{n-1} \in \left( -1, 1 \right)$, meaning that $\text{det} \left( \mathbf H_{r_n} \left(x_n^*, y_n^* \right) \right) < 0$. Then $\left( x_n^*, y_n^* \right)$ is not a local maximum, which is a contradiction.

	Thus, we must have $\argmax r_n \in \mathcal \partial \mathcal F$, which completes our proof.
\end{proof}

\begin{lemma}
	\label{lem:8point}
	\begin{equation}
		\argmax_{\mathcal F} r_n, \ \argmin_{\mathcal F} r_n \in \mathcal C \cup \mathcal I_{n-1}
	\end{equation}
\end{lemma}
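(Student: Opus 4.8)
The plan is to build on \cref{prop:boundary}, which already confines both extremizers to the boundary $\partial \mathcal F$, and to refine this by optimizing $r_n$ along each of the four edges separately. I would write $\partial \mathcal F = \partial \mathcal U_y \cup \partial \mathcal L_y \cup \partial \mathcal U_x \cup \partial \mathcal L_x$ and treat each edge in turn. On a horizontal edge such as $\partial \mathcal U_y = \{ (x_n, u_y) : \ell_x \le x_n \le u_x \}$ the coordinate $y_n = u_y$ is fixed, so $r_n$ restricts to a continuous function of the single variable $x_n$ on the compact interval $[\ell_x, u_x]$. By the one-dimensional extreme value theorem together with Fermat's theorem \cite{rudin1964principles}, this restriction attains its maximum and minimum either at an endpoint of the interval or at an interior point at which its derivative $\partial r_n / \partial x_n$ vanishes. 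The same reasoning applies verbatim to the vertical edges $\partial \mathcal U_x$ and $\partial \mathcal L_x$, where instead $x_n$ is held fixed and we differentiate in $y_n$.

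The next step is to identify these two families of candidates with $\mathcal C$ and $\mathcal I_{n-1}$. The endpoints of every edge are precisely the corner points, so the union of all edge-endpoints is exactly $\mathcal C$. For the interior critical points, \cref{prop:grad0} gives that $\partial r_n/\partial x_n = 0$ on $\partial \mathcal U_y$ exactly when $(x_n, u_y) \in \mathcal G_{x, n-1}$, i.e. where the BLUE line regressing $y$ onto $x$ meets the line $y = u_y$; solving $\hat\beta_{x,n-1} x_n + \hat\alpha_{x,n-1} = u_y$ recovers the $x$-coordinate of $I_{x, u, n-1}$. Repeating this for $\partial \mathcal L_y$, $\partial \mathcal U_x$, and $\partial \mathcal L_x$ yields $I_{x,\ell,n-1}$, $I_{y,u,n-1}$, and $I_{y,\ell,n-1}$ respectively, so that the interior critical points over all four edges are exactly the elements of $\mathcal I_{n-1}$. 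Collecting the two families shows that every edgewise extremizer lies in $\mathcal C \cup \mathcal I_{n-1}$, and hence so does the global extremizer over $\partial \mathcal F$, which by \cref{prop:boundary} coincides with the global extremizer over $\mathcal F$.

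The main obstacle is the bookkeeping around edges on which no interior critical point exists. Because $s_{x,n-1}, s_{y,n-1} > 0$ (the PCC is defined), each BLUE line is non-vertical and non-horizontal and meets a given edge-line in at most one point; that point, however, need not fall inside the edge segment, in which case the edgewise derivative has no zero in the open interval and the extremum is forced to an endpoint, i.e. to a corner in $\mathcal C$. I would also rule out the degenerate possibility that an entire edge lies on a BLUE line: a horizontal edge could coincide with $\mathcal G_{x,n-1}$ only if $\hat\beta_{x,n-1} = 0$ and $\hat\alpha_{x,n-1} = \bar y_{n-1}$ equals the edge height, forcing $\bar y_{n-1} \in \{\ell_y, u_y\}$ and hence $s_{y,n-1} = 0$, a contradiction (and symmetrically for the vertical edges). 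In every case the edgewise extremizer is either a corner or the single admissible intersection point, so no classification of critical points as maxima versus minima is required: it suffices that the full list of candidates is contained in $\mathcal C \cup \mathcal I_{n-1}$.
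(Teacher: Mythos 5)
Your proposal is correct and follows essentially the same route as the paper's proof: starting from \cref{prop:boundary}, restricting $r_n$ to each edge of $\partial \mathcal F$ as a one-variable function, and using first-order optimality to place each edgewise extremizer either at an endpoint (a corner in $\mathcal C$) or at an interior critical point, which \cref{prop:grad0} identifies with the corresponding element of $\mathcal I_{n-1}$. The additional bookkeeping you supply---handling edges whose BLUE-line intersection falls outside the segment and ruling out an edge lying entirely on a BLUE line---is sound but goes beyond what the paper's proof spells out.
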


\begin{proof}
	We extend \cref{prop:boundary} by partitioning $\partial \mathcal F$ into its four edges. Without loss of generality, we consider the edge $\partial \mathcal U_y$ and extremum $\left( x_n^*, y_n^* \right)$ and show that $\left( x_n^*, y_n^* \right)  \in \left\{ C_{\ell, u}, C_{u, u}, I_{x, u, n-1} \right\}$. Putting $f \left( x_n \right) := r_n \left( x_n, u_y \right)$ and considering one-dimensional first-order optimality conditions \cite{boyd2004convex}, we know that at least one of the following hold:
	\begin{enumerate}
		\item[(i)] $x_n^* \in \left\{ \ell_x, u_x \right\}$, or
		\item[(ii)] $f^\prime \left( x_n^* \right) = 0$.
	\end{enumerate}
	If (i), then $\left( x_n^*, y_n^* \right) = C_{\ell, u}$ or $\left( x_n^*, y_n^* \right) = C_{u, u}$. If (ii), then by \cref{prop:grad0}, we must have $\left( x_n^*, y_n^* \right) = I_{x, u, n-1}$. This completes our proof.
\end{proof}

Likewise, defining the p-value as a function $p_n: \mathcal F \to [0, 1]$, we obtain similar results.
\begin{proposition}
	\label{prop:station}
	If $\max_{\mathcal F} \left( r_n \right) \geq 0 \geq \min_{\mathcal F} \left( r_n \right)$, then there exists a point $\left(x_n^*, y_n^* \right) \in \mathcal F$ such that $r_n \left(x_n^*, y_n^* \right) = 0$.
\end{proposition}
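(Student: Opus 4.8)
The plan is to reduce the claim to the one-dimensional intermediate value theorem by restricting $r_n$ to a path joining its global minimizer and maximizer. Because $r_n$ is continuous and $\mathcal F$ is compact, the extreme value theorem (already invoked in the proof of \cref{prop:boundary}) guarantees that both $\max_{\mathcal F} r_n$ and $\min_{\mathcal F} r_n$ are attained; so I would begin by fixing points $(x_{\min}, y_{\min})$ and $(x_{\max}, y_{\max})$ in $\mathcal F$ at which $r_n$ achieves its minimum and maximum, respectively.

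Next I would exploit the geometry of the feasible region. Since $\mathcal F = [\ell_x, u_x] \times [\ell_y, u_y]$ is a rectangle, it is convex, so the straight-line segment $\gamma(t) := (1-t)(x_{\min}, y_{\min}) + t(x_{\max}, y_{\max})$ lies entirely in $\mathcal F$ for every $t \in [0,1]$. I would then define $g := r_n \circ \gamma : [0,1] \to \mathbb R$; as a composition of continuous maps, $g$ is continuous, and by hypothesis $g(0) = \min_{\mathcal F} r_n \leq 0 \leq \max_{\mathcal F} r_n = g(1)$.

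Applying the intermediate value theorem to $g$ on $[0,1]$ then produces some $t^* \in [0,1]$ with $g(t^*) = 0$; setting $(x_n^*, y_n^*) := \gamma(t^*) \in \mathcal F$ gives $r_n(x_n^*, y_n^*) = 0$, which is exactly the desired conclusion.

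There is essentially no hard step here, and the only point that warrants a moment's care is justifying that the intermediate value theorem applies in the plane rather than on an interval. I handle this by composing with an explicit path and using the convexity of $\mathcal F$ to keep that path inside the feasible region. Equivalently, one could argue that the continuous image of the connected set $\mathcal F$ is an interval of $\mathbb R$ containing both $\min_{\mathcal F} r_n$ and $\max_{\mathcal F} r_n$, hence containing $0$; the path construction above simply makes this connectedness argument concrete and avoids appealing to the more general statement.
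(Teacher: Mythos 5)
Your proof is correct, but it takes a genuinely different and more elementary route than the paper. The paper first invokes \cref{prop:boundary} to place $\argmax_{\mathcal F} r_n$ and $\argmin_{\mathcal F} r_n$ on $\partial \mathcal F$ and then appeals to the Poincar\'{e}--Miranda theorem (together with a normalization of $\mathcal F$ to the unit square) to produce a zero of $r_n$ lying on the boundary. You bypass both ingredients entirely: the extreme value theorem gives attained extrema, convexity of the rectangle keeps the connecting segment inside $\mathcal F$, and the one-dimensional intermediate value theorem applied to $r_n \circ \gamma$ produces the zero. What your argument buys is economy and generality --- it needs no information about where the extrema sit, no topological fixed-point machinery, and it would survive unchanged for any convex (or, via your connectedness remark, any path-connected) feasible region, which is relevant to the paper's own discussion of polygonal domains in \cref{sec:disc}. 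What the paper's argument buys, in principle, is the stronger localization that the zero can be found on $\partial \mathcal F$; however, that extra information is not used downstream --- \cref{lem:8pointp} only needs the existence of some zero in $\mathcal F$ to conclude $\max_{\mathcal F} p_n = 1$ --- so your weaker conclusion suffices for everything that follows. One cosmetic note: the paper's proof calls the point a ``stationary point,'' but what is actually meant (and what you correctly produce) is a zero of $r_n$, not a critical point of its gradient.
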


\begin{proof}
	By \cref{prop:boundary}, we know that $\argmax_{\mathcal F} r_n, \argmin_{\mathcal F} r_n \in \partial \mathcal F$. This allows us to invoke the Poincar\'{e}-Miranda theorem \cite{smart1980fixed} to infer that there exists a stationary point of $r_n$ in $\partial \mathcal F \subset \mathcal F$. \footnote{Technically, Poincar\'{e}-Miranda supposes that the domain of $r_n$ is the unit square as we do in \cref{fig:3d}. However, the PCC is invariant to translation and scaling, and so we are in fact able to suppose that $\mathcal F = \left[0, 1 \right]^2$ without loss of generality. We can then apply \cref{lem:8point} to an arbitrary $\mathcal F$.}
\end{proof}

\begin{proposition}
	\label{prop:pextr}
	\thinspace
	\begin{itemize}
		\item[(I)] If $\max_{\mathcal F} \left( r_n \right) \geq \min_{\mathcal F} \left( r_n \right) \geq 0$, then $\argmax_{\mathcal F} p_n = \argmin_{\mathcal F} r_n$ and $\argmin_{\mathcal F} p_n = \argmax_{\mathcal F} r_n$.
		\item[(II)] If $0 \geq \max_{\mathcal F} \left( r_n \right) \geq \min_{\mathcal F} \left( r_n \right)$, then $\argmax_{\mathcal F} p_n = \argmax_{\mathcal F} r_n$ and $\argmin_{\mathcal F} p_n = \argmin_{\mathcal F} r_n$.
	\end{itemize}
\end{proposition}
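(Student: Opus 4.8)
The plan is to exploit the fact that the p-value $p_n$ depends on the new point $(x_n,y_n)$ \emph{only} through the correlation $r_n(x_n,y_n)$: writing $p_n = \varphi(r_n)$ for the fixed map $\varphi$ determined by \cref{thm:ttest}, \cref{cor:mono} tells us that $\varphi$ is strictly decreasing in $|r_n|$. Thus on any region where $r_n$ does not change sign, $|r_n|$ is itself a monotone transformation of $r_n$, and $p_n$ becomes a monotone function of $r_n$ alone; the correspondence of extremizers then follows purely by composing monotone maps. The two hypotheses (I) and (II) are precisely the two ways in which $r_n$ can be single-signed on $\mathcal F$ --- non-negative throughout in (I), non-positive throughout in (II) --- and are exactly complementary to the zero-crossing scenario of \cref{prop:station}.

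For case (I), I would first note that $\min_{\mathcal F} r_n \geq 0$ forces $r_n \geq 0$ on all of $\mathcal F$, so $|r_n| = r_n$ and hence $\varphi(r_n) = h(r_n)$ with $h$ strictly decreasing on $[0,1]$. Composition with a strictly decreasing map reverses order, yielding $\argmax_{\mathcal F} p_n = \argmin_{\mathcal F} r_n$ and $\argmin_{\mathcal F} p_n = \argmax_{\mathcal F} r_n$. For case (II), the hypothesis $\max_{\mathcal F} r_n \leq 0$ gives $r_n \leq 0$ on all of $\mathcal F$, so $|r_n| = -r_n$ and $\varphi(r_n) = h(-r_n)$; since increasing $r_n$ toward $0$ decreases $|r_n|$ and therefore increases $\varphi$, the map $r_n \mapsto p_n$ is strictly increasing, and composition with an order-preserving map preserves extremizers, giving $\argmax_{\mathcal F} p_n = \argmax_{\mathcal F} r_n$ and $\argmin_{\mathcal F} p_n = \argmin_{\mathcal F} r_n$.

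The main obstacle --- really the only point requiring care --- is the passage from monotonicity of $\varphi$ to an \emph{exact} equality of the $\argmax$/$\argmin$ \emph{sets} rather than a mere inclusion of optimal values. This needs the strict form of \cref{cor:mono}: a strictly monotone reparametrization is a bijection on the range of $r_n$, so it carries the set of maximizers of $r_n$ onto the set of minimizers (resp.\ maximizers) of $p_n$ bijectively. I would therefore make explicit that $\varphi$ is strictly, not merely weakly, decreasing in $|r_n|$ --- which holds because $t_{n-1}$ in \cref{thm:ttest} is strictly increasing in $|r_{n-1}|$ and $F_t$ is strictly increasing --- and observe that the boundary value at $r_n = 0$, where $|r_n| = 0$ and $\varphi$ attains its maximum $p = 1$, is harmless and consistent with both correspondences.
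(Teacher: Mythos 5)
Your proposal is correct and follows essentially the same route as the paper's proof: reduce to \cref{cor:mono}, observe that the single-sign hypothesis makes $|r_n|$ coincide with $\pm r_n$ on all of $\mathcal F$, and conclude by composing with a monotone map. Your extra remark that strict (not merely weak) monotonicity is needed to equate the extremizer \emph{sets} is a worthwhile refinement the paper glosses over, but it does not change the argument's structure.
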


\begin{proof}
	It suffices only to prove (I). Due to \cref{cor:mono}, it also suffices only to show that $\argmax_{\mathcal F} |r_n| = \argmax_{\mathcal F} r_n$ and $\argmin_{\mathcal F} |r_n| = \argmin_{\mathcal F} r_n$. We then simply observe that, because both extrema are positive, we must have $|r_n| = r_n$, meaning that the extrema do indeed coincide.
\end{proof}

\begin{lemma}
	\thinspace
	\label{lem:8pointp}
	\begin{itemize}
		\item[(I)] If $\max_{\mathcal F} \left( r_n \right) \geq 0 \geq \min_{\mathcal F} \left( r_n \right)$, then $\max_{\mathcal F} p_n = 1$; otherwise, $\argmax_{\mathcal F} p_n \in \mathcal C \cup \mathcal I_{n-1}$.
		\item[(II)] $\argmin_{\mathcal F} p_n \in \mathcal C \cup \mathcal I_{n-1}$
	\end{itemize}
\end{lemma}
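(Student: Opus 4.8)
The plan is to reduce both claims to two ingredients already in hand: the strict monotone dependence of $p_n$ on $|r_n|$ from \cref{cor:mono}, and the localization of the extrema of $r_n$ to $\mathcal C \cup \mathcal I_{n-1}$ from \cref{lem:8point}. Since $p_n$ is a decreasing function of $|r_n|$, maximizing $p_n$ over $\mathcal F$ is the same as minimizing $|r_n|$, and minimizing $p_n$ is the same as maximizing $|r_n|$. The one elementary observation I would record first is that, writing $|r_n| = \max\{r_n, -r_n\}$, the maximum of $|r_n|$ over $\mathcal F$ is attained at $\argmax_{\mathcal F} r_n$ or at $\argmin_{\mathcal F} r_n$; and when $0 \notin [\min_{\mathcal F} r_n, \max_{\mathcal F} r_n]$, the minimum of $|r_n|$ is likewise attained at one of these two points. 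Thus every extremum of $|r_n|$ that we need sits at an extremum of $r_n$, which \cref{lem:8point} already places in $\mathcal C \cup \mathcal I_{n-1}$.

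Part (II) then follows uniformly with no sign hypotheses: by monotonicity $\argmin_{\mathcal F} p_n = \argmax_{\mathcal F} |r_n|$, which by the observation above equals $\argmax_{\mathcal F} r_n$ or $\argmin_{\mathcal F} r_n$, and both lie in $\mathcal C \cup \mathcal I_{n-1}$ by \cref{lem:8point}.

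For part (I) I would split on whether the range of $r_n$ straddles zero. If $\max_{\mathcal F} r_n \geq 0 \geq \min_{\mathcal F} r_n$, then \cref{prop:station} supplies a point where $r_n = 0$; substituting $r_n = 0$ into \cref{thm:ttest} gives $t_n = 0$, and since Student's $t$-distribution is symmetric about zero its CDF satisfies $F_t(0, \mathrm{df}) = \tfrac{1}{2}$, whence $p_n = 2(1 - \tfrac{1}{2}) = 1$ regardless of the degrees of freedom. As $p_n$ is valued in $[0,1]$, this is the global maximum, so $\max_{\mathcal F} p_n = 1$. Otherwise the range lies strictly on one side of zero, so $|r_n| = r_n$ everywhere (all positive) or $|r_n| = -r_n$ everywhere (all negative); accordingly $\argmax_{\mathcal F} p_n = \argmin_{\mathcal F} |r_n|$ equals $\argmin_{\mathcal F} r_n$ or $\argmax_{\mathcal F} r_n$, again in $\mathcal C \cup \mathcal I_{n-1}$ by \cref{lem:8point}. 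This alignment of the p-value extrema with the $r_n$ extrema in the one-sided regimes is exactly \cref{prop:pextr}, which I would cite to keep the argument short.

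I do not expect a deep obstacle here; the work is in the bookkeeping. The point requiring the most care is verifying that the three regimes---range straddling zero, wholly nonnegative, and wholly nonpositive---are exhaustive, and that the boundary cases $\max_{\mathcal F} r_n = 0$ and $\min_{\mathcal F} r_n = 0$ are absorbed consistently, either into the straddling case (where $p_n$ attains $1$) or into the hypotheses of \cref{prop:pextr}, which are stated with the relevant non-strict inequalities. I would also flag that $\argmax$ and $\argmin$ may be sets rather than single points, so the statement is to be read as asserting the existence of a maximizer, respectively minimizer, in $\mathcal C \cup \mathcal I_{n-1}$, which is all that the subsequent algorithmic development needs.
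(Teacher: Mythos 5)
Your proposal is correct and follows essentially the same route as the paper: the same case split on whether the range of $r_n$ straddles zero, with \cref{prop:station} yielding $\max_{\mathcal F} p_n = 1$ in the straddling case and \cref{cor:mono}, \cref{prop:pextr}, and \cref{lem:8point} handling the rest. The extra details you supply (the explicit computation that $r_n = 0$ forces $p_n = 1$ via symmetry of the $t$-distribution, and the reading of $\argmax$/$\argmin$ as sets) are sound refinements of the paper's terser argument, not a different approach.
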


\begin{proof}
	We have three cases:
	\begin{enumerate}
		\item[(i)] $\max_{\mathcal F} \left( r_n \right) \geq 0 \geq \min_{\mathcal F} \left( r_n \right)$,
		\item[(ii)] $\max_{\mathcal F} \left( r_n \right) \geq \min_{\mathcal F} \left( r_n \right) \geq 0$, and
		\item[(iii)] $0 \geq \max_{\mathcal F} \left( r_n \right) \geq \min_{\mathcal F} \left( r_n \right)$.
	\end{enumerate}
	Suppose that (i). By \cref{prop:station}, we will then have a stationary point. The p-value associated with a PCC of 0 is 1, giving us $\max_{\mathcal F} p_n = 1$ and therefore (I). By \cref{cor:mono}, we again know that $\argmin_{\mathcal F} p_n = \argmax_{\mathcal F} |r_n|$, meaning that $\argmin_{\mathcal F} p_n \in \left\{ \argmax_{\mathcal F} r_n, \argmin_{\mathcal F} r_n \right\}$. By \cref{lem:8point}, we conclude that $\argmin_{\mathcal F} p_n \in \mathcal C \cup \mathcal I_{n-1}$, giving us (II).

	If (ii) or (iii), we apply \cref{prop:pextr} and \cref{lem:8point} to obtain both (I) and (II) and complete our proof.
\end{proof}

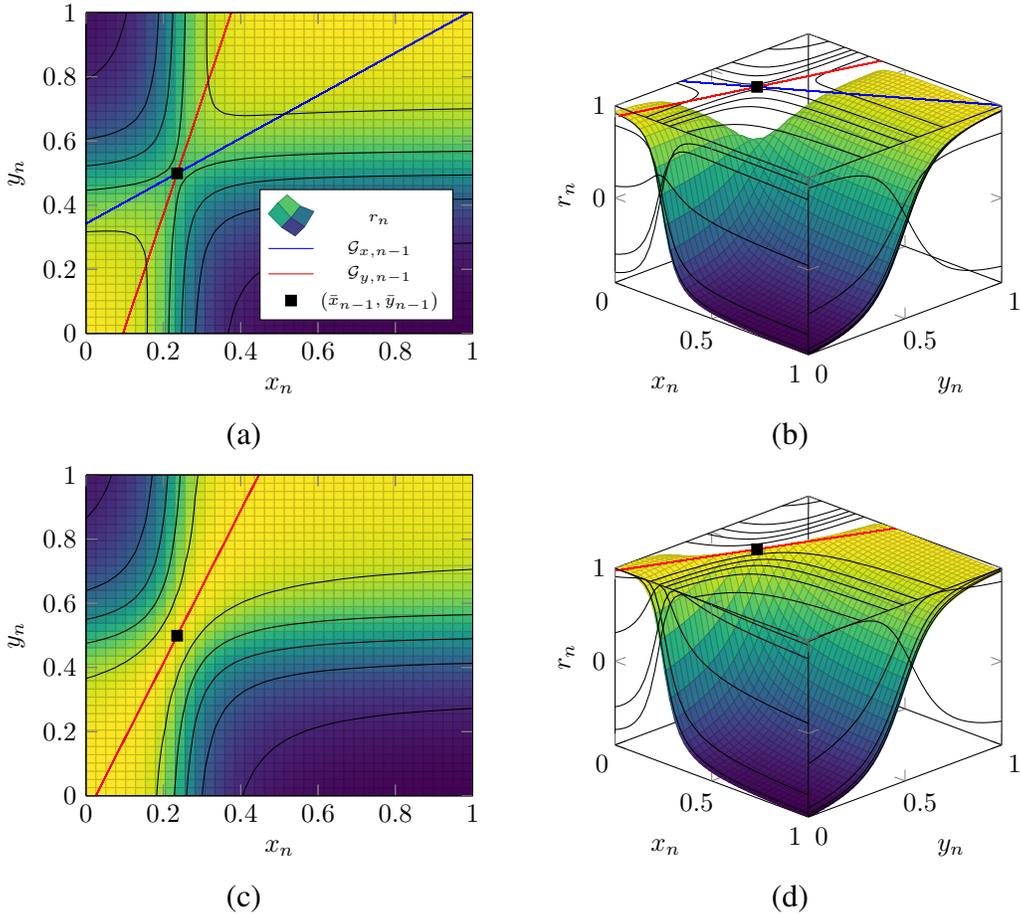
\begin{figure}[h]
    \centering
    \begin{subfigure}[b]{.48\linewidth}
        \centering
		\begin{tikzpicture}
			\pgfmathsetmacro{\n}{10}
			\pgfmathsetmacro{\xprevbar}{0.23556460542404717}
			\pgfmathsetmacro{\yprevbar}{0.4989968886385938}
			\pgfmathsetmacro{\sxprev}{0.024478175283532327}
			\pgfmathsetmacro{\syprev}{0.058007671902636324}
			\pgfmathsetmacro{\sxyprev}{0.016275673571944073}
			\pgfmathsetmacro{\rprev}{0.4319233997986131}

			\pgfmathsetmacro\betax{\sxyprev / \sxprev}
			\pgfmathsetmacro\alphax{\yprevbar - \betax*\xprevbar}
			\pgfmathsetmacro\Ixux{(1 - \alphax) / \betax}
			\pgfmathsetmacro\Ixlx{(0 - \alphax) / \betax}

			\pgfmathsetmacro\betay{\sxyprev / \syprev}
			\pgfmathsetmacro\alphay{\xprevbar - \betay*\yprevbar}
			\pgfmathsetmacro\Iyuy{(1 - \alphay) / \betay}
			\pgfmathsetmacro\Iyly{(0 - \alphay) / \betay}

			\pgfmathsetmacro\xminx{max(0., (0. - \alphax) / \betax)}
			\pgfmathsetmacro\xmaxx{min(1., (1. - \alphax) / \betax)}

			\pgfmathsetmacro\yminy{max(0., (0. - \alphay) / \betay)}
			\pgfmathsetmacro\ymaxy{min(1., (1. - \alphay) / \betay)}

			\begin{axis}[
				view={0}{90},
				axis on top,
				enlargelimits=false,
				xlabel={$x_n$},
				ylabel={$y_n$},
				samples=40,
				domain=0:1,
				domain y=0:1,
				scale=0.75,
				legend style={at={(.95,0.45)}},
				colormap name=viridis,
				3d box=complete,
				enlargelimits=false,
			]
				\addplot3[surf] {
					(\n * \rprev * \sxprev * \syprev + (x - \xprevbar) * (y - \yprevbar)) / sqrt((\n * \sxprev^2 + (x - \xprevbar)^2) * (\n * \syprev^2 + (y - \yprevbar)^2))
				};
				\addlegendentry{\tiny $r_n$}
				\addplot3[-,
						domain=\xminx:\xmaxx,
						smooth,
						variable=\x,
						blue
						]
						({\x}, {\betax*\x + \alphax}, {0});
				\addlegendentry{\tiny $\mathcal G_{x,n-1}$}
				\addplot3[-,
						 domain=\yminy:\ymaxy,
						 smooth,
						 variable=\y,
						 red
						 ]
						 ({\betay*\y + \alphay}, {\y}, {0});
				\addlegendentry{\tiny $\mathcal G_{y,n-1}$}

				\addplot[only marks, mark=square*] coordinates {
					(\xprevbar,\yprevbar)
				};
				\addlegendentry{\tiny $\left(\bar x_{n-1}, \bar y_{n-1} \right)$}
				 \addplot3 [
					contour lua={contour dir=z,
						draw color=black,labels=false},
					z filter/.expression={1},
				] { 
					(\n * \rprev * \sxprev * \syprev + (x - \xprevbar) * (y - \yprevbar)) / sqrt((\n * \sxprev^2 + (x - \xprevbar)^2) * (\n * \syprev^2 + (y - \yprevbar)^2))
				};

			\end{axis}
		\end{tikzpicture}
		\caption{}
    \end{subfigure}%
    \hfill
    \begin{subfigure}[b]{.48\linewidth}
        \centering
        \begin{tikzpicture}
			\pgfmathsetmacro{\n}{10}
			\pgfmathsetmacro{\xprevbar}{0.23556460542404717}
			\pgfmathsetmacro{\yprevbar}{0.4989968886385938}
			\pgfmathsetmacro{\sxprev}{0.024478175283532327}
			\pgfmathsetmacro{\syprev}{0.058007671902636324}
			\pgfmathsetmacro{\sxyprev}{0.016275673571944073}
			\pgfmathsetmacro{\rprev}{0.4319233997986131}

			\pgfmathsetmacro\betax{\sxyprev / \sxprev}
			\pgfmathsetmacro\alphax{\yprevbar - \betax*\xprevbar}
			\pgfmathsetmacro\Ixux{(1 - \alphax) / \betax}
			\pgfmathsetmacro\Ixlx{(0 - \alphax) / \betax}

			\pgfmathsetmacro\betay{\sxyprev / \syprev}
			\pgfmathsetmacro\alphay{\xprevbar - \betay*\yprevbar}
			\pgfmathsetmacro\Iyuy{(1 - \alphay) / \betay}
			\pgfmathsetmacro\Iyly{(0 - \alphay) / \betay}

			\pgfmathsetmacro\xminx{max(-0., (-0. - \alphax) / \betax)}
			\pgfmathsetmacro\xmaxx{min(1., (1. - \alphax) / \betax)}

			\pgfmathsetmacro\yminy{max(0, (0 - \alphay) / \betay)}
			\pgfmathsetmacro\ymaxy{min(1, (1 - \alphay) / \betay)}
			\begin{axis}[
				xlabel={$x_n$},
				ylabel={$y_n$},
				zlabel={$r_n$},
                view={45}{30},
				scale=0.75,
				domain=0:1,
				domain y=0:1,
				samples=40,
				colormap name=viridis,
				3d box=complete,
				enlargelimits=false,
			]
				\addplot3[surf] {
					(\n * \rprev * \sxprev * \syprev + (x - \xprevbar) * (y - \yprevbar)) / sqrt((\n * \sxprev^2 + (x - \xprevbar)^2) * (\n * \syprev^2 + (y - \yprevbar)^2))
				};
				 \addplot3 [
					contour lua={contour dir=y,
						draw color=black,labels=false},
					y filter/.expression={0},
				] { 
					(\n * \rprev * \sxprev * \syprev + (x - \xprevbar) * (y - \yprevbar)) / sqrt((\n * \sxprev^2 + (x - \xprevbar)^2) * (\n * \syprev^2 + (y - \yprevbar)^2))
				};
				 \addplot3 [
					contour lua={contour dir=x,
						draw color=black,labels=false},
					x filter/.expression={1},
				] { 
					(\n * \rprev * \sxprev * \syprev + (x - \xprevbar) * (y - \yprevbar)) / sqrt((\n * \sxprev^2 + (x - \xprevbar)^2) * (\n * \syprev^2 + (y - \yprevbar)^2))
				};
				 \addplot3 [
					contour lua={contour dir=z,
						draw color=black,labels=false},
					z filter/.expression={1},
				] { 
					(\n * \rprev * \sxprev * \syprev + (x - \xprevbar) * (y - \yprevbar)) / sqrt((\n * \sxprev^2 + (x - \xprevbar)^2) * (\n * \syprev^2 + (y - \yprevbar)^2))
				};
				\addplot3[-,
						domain=\xminx:\xmaxx,
						smooth,
						variable=\x,
						blue
						]
						({\x}, {\betax*\x + \alphax}, {1});
				\addplot3[-,
						 domain=\yminy:\ymaxy,
						 smooth,
						 variable=\y,
						 red
						 ]
						 ({\betay*\y + \alphay}, {\y}, {1});
			\addplot3[only marks, mark=square*] coordinates {
				(\xprevbar,\yprevbar,1)
			};
            \end{axis}
        \end{tikzpicture}
		\caption{}
    \end{subfigure}
    \begin{subfigure}[b]{.48\linewidth}
        \centering
		\begin{tikzpicture}
			\pgfmathsetmacro{\n}{10}
			\pgfmathsetmacro{\xprevbar}{0.23556460542404717}
			\pgfmathsetmacro{\yprevbar}{0.4989968886385938}
			\pgfmathsetmacro{\sxprev}{0.024478175283532327}
			\pgfmathsetmacro{\syprev}{0.058007671902636324}
			\pgfmathsetmacro{\sxyprev}{0.016275673571944073}
			\pgfmathsetmacro{\rprev}{1}

			\pgfmathsetmacro\betax{\rprev * \syprev / \sxprev}
			\pgfmathsetmacro\alphax{\yprevbar - \betax*\xprevbar}
			\pgfmathsetmacro\Ixux{(1 - \alphax) / \betax}
			\pgfmathsetmacro\Ixlx{(0 - \alphax) / \betax}

			\pgfmathsetmacro\betay{\rprev * \sxprev / \syprev}
			\pgfmathsetmacro\alphay{\xprevbar - \betay*\yprevbar}
			\pgfmathsetmacro\Iyuy{(1 - \alphay) / \betay}
			\pgfmathsetmacro\Iyly{(0 - \alphay) / \betay}

			\pgfmathsetmacro\xminx{max(0, (0 - \alphax) / \betax)}
			\pgfmathsetmacro\xmaxx{min(1, (1 - \alphax) / \betax)}

			\pgfmathsetmacro\yminy{max(0, (0 - \alphay) / \betay)}
			\pgfmathsetmacro\ymaxy{min(1, (1 - \alphay) / \betay)}
			\begin{axis}[
				view={0}{90},
				axis on top,
				enlargelimits=false,
				xlabel={$x_n$},
				ylabel={$y_n$},
				samples=40,
				domain=0:1,
				domain y=0:1,
				scale=0.75,
				colormap name=viridis,
				3d box=complete,
				enlargelimits=false,
			]

			\addplot3[surf] {
				(\n * \rprev * \sxprev * \syprev + (x - \xprevbar) * (y - \yprevbar)) / sqrt((\n * \sxprev^2 + (x - \xprevbar)^2) * (\n * \syprev^2 + (y - \yprevbar)^2))
			};
			\addplot3[-,
					domain=\xminx:\xmaxx,
					smooth,
					variable=\x,
					blue
					]
					({\x}, {\betax*\x + \alphax}, {0});
			\addplot3[-,
					 domain=\yminy:\ymaxy,
					 smooth,
					 variable=\y,
					 red
					 ]
					 ({\betay*\y + \alphay}, {\y}, {0});

			\addplot[only marks, mark=square*] coordinates {
				(\xprevbar,\yprevbar)
			};
			 \addplot3 [
				contour lua={contour dir=z,
					draw color=black,labels=false},
				z filter/.expression={1},
			] { 
				(\n * \rprev * \sxprev * \syprev + (x - \xprevbar) * (y - \yprevbar)) / sqrt((\n * \sxprev^2 + (x - \xprevbar)^2) * (\n * \syprev^2 + (y - \yprevbar)^2))
			};

			\end{axis}
		\end{tikzpicture}
		\caption{}
    \end{subfigure}%
    \hfill
    \begin{subfigure}[b]{.48\linewidth}
        \centering
        \begin{tikzpicture}
			\pgfmathsetmacro{\n}{10}
			\pgfmathsetmacro{\xprevbar}{0.23556460542404717}
			\pgfmathsetmacro{\yprevbar}{0.4989968886385938}
			\pgfmathsetmacro{\sxprev}{0.024478175283532327}
			\pgfmathsetmacro{\syprev}{0.058007671902636324}
			\pgfmathsetmacro{\sxyprev}{0.016275673571944073}
			\pgfmathsetmacro{\rprev}{1}

			\pgfmathsetmacro\betax{\rprev * \syprev / \sxprev}
			\pgfmathsetmacro\alphax{\yprevbar - \betax*\xprevbar}
			\pgfmathsetmacro\Ixux{(1 - \alphax) / \betax}
			\pgfmathsetmacro\Ixlx{(0 - \alphax) / \betax}

			\pgfmathsetmacro\betay{\rprev * \sxprev / \syprev}
			\pgfmathsetmacro\alphay{\xprevbar - \betay*\yprevbar}
			\pgfmathsetmacro\Iyuy{(1 - \alphay) / \betay}
			\pgfmathsetmacro\Iyly{(0 - \alphay) / \betay}

			\pgfmathsetmacro\xminx{max(0, (0 - \alphax) / \betax)}
			\pgfmathsetmacro\xmaxx{min(1, (1 - \alphax) / \betax)}

			\pgfmathsetmacro\yminy{max(0, (0 - \alphay) / \betay)}
			\pgfmathsetmacro\ymaxy{min(1, (1 - \alphay) / \betay)}
            \begin{axis}[
                xlabel={$x_n$},
                ylabel={$y_n$},
				zlabel={$r_n$},
                view={45}{30},
                scale=0.75,
                domain=0:1,
                domain y=0:1,
                samples=40,
				colormap name=viridis,
				3d box=complete,
				enlargelimits=false,
            ]
			\addplot3[surf] {
				(\n * \rprev * \sxprev * \syprev + (x - \xprevbar) * (y - \yprevbar)) / sqrt((\n * \sxprev^2 + (x - \xprevbar)^2) * (\n * \syprev^2 + (y - \yprevbar)^2))
			};
			 \addplot3 [
				contour lua={contour dir=y,
					draw color=black,labels=false},
				y filter/.expression={0},
			] { 
				(\n * \rprev * \sxprev * \syprev + (x - \xprevbar) * (y - \yprevbar)) / sqrt((\n * \sxprev^2 + (x - \xprevbar)^2) * (\n * \syprev^2 + (y - \yprevbar)^2))
			};
			 \addplot3 [
				contour lua={contour dir=x,
					draw color=black,labels=false},
				x filter/.expression={1},
			] { 
				(\n * \rprev * \sxprev * \syprev + (x - \xprevbar) * (y - \yprevbar)) / sqrt((\n * \sxprev^2 + (x - \xprevbar)^2) * (\n * \syprev^2 + (y - \yprevbar)^2))
			};
			 \addplot3 [
				contour lua={contour dir=z,
					draw color=black,labels=false},
				z filter/.expression={1},
			] { 
				(\n * \rprev * \sxprev * \syprev + (x - \xprevbar) * (y - \yprevbar)) / sqrt((\n * \sxprev^2 + (x - \xprevbar)^2) * (\n * \syprev^2 + (y - \yprevbar)^2))
			};
			\addplot3[-,
					domain=\xminx:\xmaxx,
					smooth,
					variable=\x,
					blue
					]
					({\x}, {\betax*\x + \alphax}, {1});
			\addplot3[-,
					 domain=\yminy:\ymaxy,
					 smooth,
					 variable=\y,
					 red
					 ]
					 ({\betay*\y + \alphay}, {\y}, {1});
			\addplot3[only marks, mark=square*] coordinates {
				(\xprevbar,\yprevbar,1)
			};
            \end{axis}
        \end{tikzpicture}
		\caption{}
    \end{subfigure}
	\caption{(a) The same dataset in \cref{fig:2d} with both BLUE lines superimposed on the function $r_n$. (b) A 3D view of $r_n$ for the same dataset in \cref{fig:2d}. Critically, $\left( \bar x_{n-1}, \bar y_{n-1} \right)$ is a saddle point. As indicated by the projected contour lines of $r_n$, we have stationary points, the existence of which are formally provable via Poincar\'{e}-Miranda \cite{smart1980fixed}. (c) A dataset similar to that in \cref{fig:2d}, albeit such that $r \left( D \right) = 1$. Both BLUE lines coincide. (d) A 3D view of the same dataset in \cref{fig:2d}(c). The BLUE lines form a smooth curve $U \subset \mathcal F$ along which $\nabla r_n = \mathbf 0$ and therefore $r_n$ is constant.}
	\label{fig:3d}
\end{figure}

\section{Main algorithm}
\label{sec:algo}

We are now able to develop our key algorithm, \textproc{ComputePrimarySensitivities} (\cref{alg}). For its implementation, we assume that $|D| = n$, not $n-1$ as we did above. For the pseudocode below, we write $\textproc{Welford}$ for the online summation of Welford's parameters and \textproc{StudentsT} to perform the statistical test in \cref{thm:ttest}. We use our earlier recurrence relation for $r_{n+1}$ to update $D$ merely for brevity; the update rules in \cref{thm:welford} could be employed just as appropriately.

Leveraging \cref{lem:8point}, it computes each intersection point, determines which of them are feasible, and then adds these along with the corner points to $D$ to determine which induces the largest changes to the PCC, thereby computing $\Delta_1 r$. Afterwards, we test for whether $\max r_{n+1} \geq 0 \geq \min r_{n+1}$ in order to determine whether we have $\max p_{n+1} = 1$ by \cref{lem:8pointp}. If so, then either $\Delta_1 p = 1 - p_n$ or it is directly derivable from the corner and intersection points, again by the same lemma.

\begin{theorem}
	\textproc{ComputePrimarySensitivities} (\cref{alg}) correctly returns $\Delta_1 r$ and $\Delta_1 p$.
\end{theorem}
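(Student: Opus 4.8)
The plan is to show that \textproc{ComputePrimarySensitivities} correctly reduces each continuous optimization over $\mathcal F$ to a finite maximization over the candidate set $\mathcal C \cup \mathcal I_{n-1}$, exactly as licensed by \cref{lem:8point} and \cref{lem:8pointp}. I would begin by unpacking \cref{def:primary}. Since $r_{n-1}$ is constant in $(x_n, y_n)$, the map $(x_n, y_n) \mapsto \lvert r_{n-1} - r_n(x_n, y_n) \rvert$ attains its maximum over $\mathcal F$ at either $\argmax_{\mathcal F} r_n$ or $\argmin_{\mathcal F} r_n$, so that
\[
\Delta_1 r = \max\!\left( \lvert r_{n-1} - \max_{\mathcal F} r_n \rvert, \; \lvert r_{n-1} - \min_{\mathcal F} r_n \rvert \right).
\]
By \cref{lem:8point}, both extrema of $r_n$ are attained in $\mathcal C \cup \mathcal I_{n-1}$; and since every maximizer lies in $\mathcal F$ by construction while $\mathcal C \subset \mathcal F$ always holds, the relevant extremizer in fact lies in $\mathcal C \cup \left( \mathcal I_{n-1} \cap \mathcal F \right)$. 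This is precisely what justifies the algorithm's feasibility filter on the intersection points: discarding any $I \in \mathcal I_{n-1}$ with $I \notin \mathcal F$ cannot remove the true extremizer, so evaluating $r_n$ (via \textproc{Welford}) only on the retained candidates recovers $\max_{\mathcal F} r_n$ and $\min_{\mathcal F} r_n$ exactly, and hence $\Delta_1 r$.

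The p-value case proceeds analogously but branches on the sign condition. Writing $\Delta_1 p = \max\!\left( \lvert p_{n-1} - \max_{\mathcal F} p_n \rvert, \lvert p_{n-1} - \min_{\mathcal F} p_n \rvert \right)$, I would invoke \cref{lem:8pointp}: part (II) guarantees $\argmin_{\mathcal F} p_n \in \mathcal C \cup \mathcal I_{n-1}$ unconditionally, so the same candidate evaluation (now through \textproc{StudentsT}) yields $\min_{\mathcal F} p_n$. For the maximum, part (I) splits according to whether $\max_{\mathcal F} r_n \ge 0 \ge \min_{\mathcal F} r_n$. When this holds, a stationary point exists by \cref{prop:station}, giving $\max_{\mathcal F} p_n = 1$, and the algorithm correctly substitutes $1 - p_{n-1}$ for that term; otherwise $\argmax_{\mathcal F} p_n$ lies in the candidate set and is recovered by the same enumeration. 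Because the extrema of $r_n$ over the candidates have already been computed in the first phase, the sign test written $\max r_{n+1} \ge 0 \ge \min r_{n+1}$ in the pseudocode (modulo the $n$ versus $n-1$ indexing shift noted for the implementation) is itself exact, so the correct branch is always selected.

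The main obstacle is the interplay between the feasibility filter and the guarantees of the lemmas: one must argue carefully that excluding infeasible intersection points does not cause the algorithm to overlook a genuine extremum, which follows only because the argmax/argmin is by definition a point of $\mathcal F$ and the corners are unconditionally feasible. A secondary subtlety is confirming that the degenerate branch $\max_{\mathcal F} p_n = 1$ is handled without evaluating $p_n$ at an interior stationary point, which the reduction via \cref{prop:station} avoids entirely. The remaining steps are bookkeeping: that \textproc{Welford} reproduces $r_n$ at each candidate and \textproc{StudentsT} the corresponding $p_n$, so that each discrete maximum the algorithm reports equals the continuous supremum it is meant to compute.
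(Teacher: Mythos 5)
Your proposal is correct and follows exactly the paper's own route: the paper simply states that the theorem "follows trivially from \cref{lem:8point} and \cref{lem:8pointp}," and your argument is a careful expansion of precisely that reduction, including the feasibility filter on $\mathcal I_{n-1}$ and the sign-based branch for $\max_{\mathcal F} p_n = 1$. No gaps; you have merely made explicit the bookkeeping the paper leaves implicit.
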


\begin{proof}
	This follows trivially from \cref{lem:8point} and \cref{lem:8pointp}.
\end{proof}

\begin{theorem}
	\label{thm:linear}
	\textproc{ComputePrimarySensitivities} (\cref{alg}) runs in $O(n)$ time.
\end{theorem}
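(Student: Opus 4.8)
The plan is to decompose the algorithm into a linear preprocessing phase and a constant-size search phase, and to show that the former dominates at $O(n)$ while the latter contributes only $O(1)$. First I would analyze the call to \textproc{Welford}. Computing the sufficient statistics $\bar x_{n}$, $\bar y_{n}$, $s_{x,n}$, $s_{y,n}$, and $s_{xy,n}$ (equivalently $r_n$ together with the moments feeding Gauss--Markov) amounts to accumulating the running sums $\sum x_i$, $\sum y_i$, $\sum x_i^2$, $\sum y_i^2$, and $\sum x_i y_i$ in a single pass over the $n$ points of $D$. Each point contributes a constant number of arithmetic operations, so this phase is $\Theta(n)$. From these moments the BLUE parameters $\hat\beta$ and $\hat\alpha$ follow by \cref{thm:gauss} in $O(1)$.

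Next I would bound the search phase. By \cref{def:corner} and \cref{def:inter} there are exactly four corner points and at most four intersection points, i.e.\ a constant number---at most eight---of candidate locations, and \cref{lem:8point} and \cref{lem:8pointp} guarantee these contain $\argmax_{\mathcal F} r_n$, $\argmin_{\mathcal F} r_n$ and the $p$-value extrema. Each intersection point is obtained from the already-computed BLUE parameters in $O(1)$, and its feasibility (membership in $\mathcal F$) is decided by a constant number of comparisons. For each of the constantly many feasible candidates I would argue that evaluating $r_n(x_n,y_n)$ and forming the updated PCC costs $O(1)$: because the moments of $D$ are already stored, either the closed form for $r_n$ or the single-point recurrence of \cref{thm:welford} yields the updated correlation without rescanning $D$. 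Tracking the running maximum of $\lvert r_n - r_{n-1} \rvert$ over these candidates therefore produces $\Delta_1 r$ in $O(1)$ total. The branch test $\max r_{n+1} \geq 0 \geq \min r_{n+1}$ and the subsequent computation of $\Delta_1 p$---returning either $1 - p_n$ or the maximal deviation over the same candidate set via \textproc{StudentsT}---are likewise constant work.

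The main obstacle is justifying that \textproc{StudentsT} runs in $O(1)$. Evaluating $F_t$ from \cref{thm:ttest} reduces to the regularized incomplete beta function, which admits no elementary closed form and is computed by an iterative series or continued-fraction expansion. I would resolve this by adopting a fixed-precision (real-RAM or bounded-word) cost model in which, for a prescribed tolerance, the expansion converges in a constant number of iterations, so that each $t$-test is a unit-cost primitive; under this convention the entire $p$-value branch is $O(1)$. Combining the two phases then yields a total running time of $O(n) + O(1) = O(n)$, completing the argument.
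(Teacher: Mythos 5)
Your proposal is correct and follows essentially the same route as the paper's proof: the single Welford pass costs $O(n)$, while every other step---the BLUE parameters, the at most eight candidate points from $\mathcal C \cup \mathcal I_n$, and the sensitivity comparisons---is $O(1)$, so the scan dominates. Your explicit treatment of \textproc{StudentsT} as a unit-cost primitive under a fixed-precision model is a reasonable refinement of an assumption the paper leaves implicit, but it does not change the structure of the argument.
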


\begin{proof}
	Each of lines 3-9, 11-14, 16-17, and 20-21 will require $O(1)$ time. The loops and set comprehensions in lines 10, 18, and 19 will run for a constant number of iterations given that the number of corner and intersection points is bounded. Therefore, the algorithm is dominated by the call to \textproc{Welford} in line 2, which will require a single $O(n)$ pass to sum all relevant quantities \cite{welford1962note}.
\end{proof}

\begin{corollary}
	Suppose that the Welford parameters of $D$ are computed in advance. Then $\Delta_1 r$ and $\Delta_1 p$ can be computed in $O(1)$ time.
\end{corollary}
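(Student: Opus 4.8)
The plan is to leverage the runtime decomposition already established in the proof of \cref{thm:linear}. There we observed that every step of \textproc{ComputePrimarySensitivities} (\cref{alg}) runs in $O(1)$ time except for the single call to \textproc{Welford} on line 2, which requires one $O(n)$ pass over $D$ to accumulate the running means, standard deviations, and covariance. The loops and set comprehensions that enumerate candidate points (lines 10, 18, 19) each iterate over $\mathcal C \cup \mathcal I_{n-1}$, a set of at most eight points, and so contribute only a constant factor. Thus the entire argument reduces to isolating and removing the one linear-cost step.

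First I would make precise what the hypothesis buys us: ``computing the Welford parameters in advance'' means that $\bar x_{n-1}, \bar y_{n-1}, s_{x,n-1}, s_{y,n-1}, s_{xy,n-1}$, and hence $r_{n-1}$, are stored and retrievable in $O(1)$ time. Under this assumption, line 2 reduces to a constant-time lookup rather than an $O(n)$ scan, eliminating the sole source of linear cost identified in \cref{thm:linear}. Next I would confirm that every downstream computation remains $O(1)$ once these parameters are available: evaluating $r_n(x_n, y_n)$ at a fixed candidate via its closed form, computing the BLUE slopes and intercepts to locate each point of $\mathcal I_{n-1}$, testing feasibility against $\mathcal F$, and applying \textproc{StudentsT} to convert a PCC into a p-value are all closed-form arithmetic operations independent of $n$. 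Since \cref{lem:8point} and \cref{lem:8pointp} guarantee the optimizers lie in the constant-size set $\mathcal C \cup \mathcal I_{n-1}$, it suffices to evaluate a bounded number of candidates and take extrema, all in $O(1)$. Summing finitely many $O(1)$ costs then yields the claimed $O(1)$ total.

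There is no substantive obstacle here; the statement is an immediate corollary of the structure exposed in the proof of \cref{thm:linear}, and I would present it as such rather than re-deriving the per-line accounting. The only point warranting care is the bookkeeping convention that the precomputed parameters are genuinely accessible in constant time (a cache or a maintained online summary), which is the natural reading of the hypothesis and is exactly the regime motivating the online setting, where Welford's recurrences let one refresh these quantities incrementally as data arrive.
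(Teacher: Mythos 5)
Your proposal is correct and follows exactly the paper's argument: the paper likewise proves this corollary by rerunning the accounting from \cref{thm:linear} with the \textproc{Welford} call on line 2 excluded, leaving only constant-time steps. Your additional remarks about what ``precomputed in advance'' means and the constant size of $\mathcal C \cup \mathcal I_{n-1}$ are consistent elaborations, not a different route.
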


\begin{proof}
	We repeat our proof for \cref{thm:linear}, albeit excluding \textproc{Welford} in line 2, resulting in a $O(1)$ runtime.
\end{proof}

\begin{algorithm}
\caption{Calculates primary sensitivities of $D$ within $\mathcal F$.}
\label{alg}
\begin{algorithmic}[1]
\Procedure{ComputePrimarySensitivities}{$D, \mathcal F$}
	\State $n, \ \bar x_n, \ \bar y_n, \ s_{x, n}, \ s_{y, n}, \ s_{xy, n} \gets \Call{Welford}{D}$
	\State $\hat{\beta}_{x,n}, \ \hat{\beta}_{y, n} \gets \frac{s_{xy, n}}{s_{x, n}}, \ \frac{s_{xy, n}}{s_{y, n}}$
	\State $\hat{\alpha}_{x,n}, \ \hat{\alpha}_{y, n} \gets \bar y_n - \hat{\beta}_{x,n}, \ \bar x_n - \hat{\beta}_{y, n} \bar y_n$
    \State $\mathcal C \leftarrow \left\{(x, y) \mid (x = \ell_x \text{ or } x = u_x) \text{ and } (y = \ell_y \text{ or } y = u_y)\right\}$
	\State $\mathcal I_n \leftarrow \left\{ \left(\hat{\beta}_{x, n}^{-1} \left(y - \hat{\alpha}_{x, n} \right), y \right) \mid y = \ell_y \text{ and } y = u_y \right\}$
	\State $\mathcal I_n \leftarrow \mathcal I_n \cup \left\{ \left(x, \hat{\beta}_{y, n}^{-1} \left(x - \hat{\alpha}_{y, n} \right) \right) \mid x = \ell_x \text{ and } x = u_x \right\}$
	\State $\mathcal I_n \gets \left\{ (x, y) \in \mathcal I_n: \ell_x \leq x \leq u_x \text{ and } \ell_y \leq y \leq u_y \right\}$
	\State $\mathbf r_{n+1}, \ \mathbf p_{n+1} \leftarrow \emptyset, \ \emptyset$
	\ForEach {$(x_{n+1}, y_{n+1}) \in \mathcal C \cup \mathcal I_n$}
		\State $\Delta x, \ \Delta y \gets x_{n+1} - \bar x_n, \ y_{n+1} - \bar y_n$
		\State $r_{n+1} \gets \frac{(n+1) s_{x,n} s_{y,n} + \Delta x \Delta y}{\sqrt{\left( (n+1) s_{x,n}^2 + \Delta x^2 \right) \left( (n+1) s_{y,n}^2 + \Delta y^2 \right)}}$
		\State $\mathbf r_{n+1} \leftarrow \mathbf r_{n+1} \cup \left\{ r_{n+1} \right\}$
		\State $\mathbf p_{n+1} \leftarrow \mathbf p_{n+1} \cup \left\{ \Call{StudentsT}{r_{n+1}, n} \right\}$
    \EndFor
	\State $r_n \gets \frac{s_{xy, n}}{s_{x, n} s_{y, n}}$
	\State $p_n \gets \Call{StudentsT}{r_n, n}$
	\State $\Delta_1 r \gets \max \left( \left\{ \left| r_n - r_{n+1} \right|: r_{n+1} \in \mathbf r_{n+1}  \right\} \right)$
	\State $\Delta_1 p \gets \max \left( \left\{ \left| p_n - p_{n+1} \right|: p_{n+1} \in \mathbf p_{n+1}  \right\} \right)$
	\If {$\max(\mathbf r_{n+1}) \geq 0 \geq \min(\mathbf r_{n+1})$}
		\State $\Delta_1 p \gets \max \left( \Delta_1 p, 1 - p_n \right)$
	\EndIf
	\State \Return $\Delta_1 r, \ \Delta_1 p$
\EndProcedure
\end{algorithmic}
\end{algorithm}

\section{Experiments}
\label{sec:exp}

\subsection{Synthetic data}

\begin{figure}[h]
	\centering
	\includegraphics[width=\textwidth]{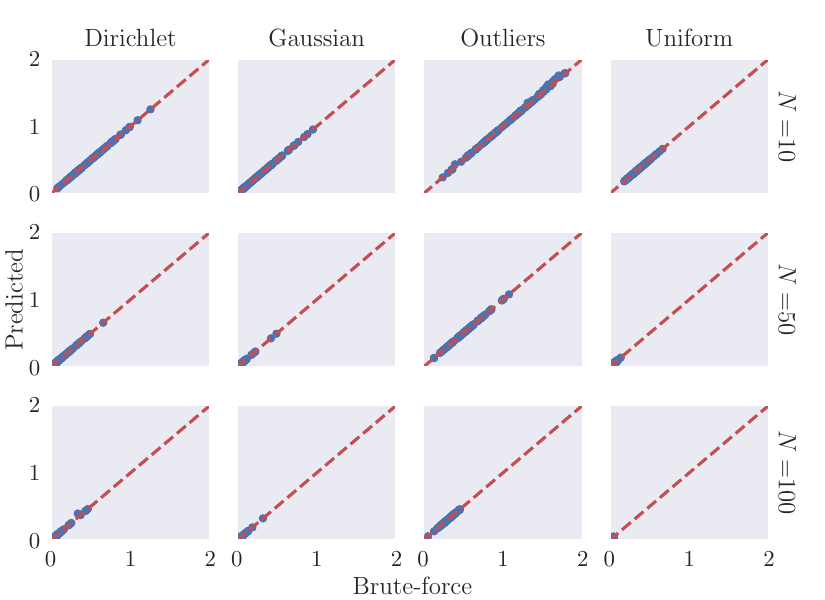}
	\caption{$\Delta_1 r$ for various distributions as computed by both \textproc{ComputePrimarySensitivities} (\cref{alg}) and brute-force grid search.}
	\label{fig:res}
\end{figure}

In order to empirically test the correctness of our algorithm, we generated synthetic datasets of varying sizes and compared the results of \textproc{ComputePrimarySensitivities} (\cref{alg}) to brute-force grid search for the four following distributions. First, we sampled each coordinate from an independent uniform distribution on $[-10, 10]$. Second, we drew from bivariate Gaussian distributions with means of $\boldsymbol \mu = \mathbf 0$ and covariance matrices given by $\boldsymbol \Sigma = \mathbf A^\intercal \mathbf A$, where each element in $\mathbf A \in \left[0, 1 \right]^{2 \times 2}$ was randomly generated from a uniform distribution at each iteration (multiplication by its own transpose was intended to enforce symmetry and positive definiteness). Third, we sampled from Dirichlet distributions with each component $\alpha_i$ of the parameter vector $\boldsymbol \alpha \in \mathbb R^3$ generated by sampling a uniform distribution on $\left[0, 10 \right]$. Fourth, we again generated bivariate Gaussian distributions as above, but substituted 10\% of each sample with outliers generated from a uniform distribution on $\left[-30, 30 \right]$ \cite{virtanen2020scipy}.

For each distribution, we sampled $N \in \left\{ 10, 50, 100 \right\}$ points, and for each distribution and $N$, repeated 100 separate experiments. The feasible region was simply set to the bounding box of each dataset, meaning that its size varied dynamically with each experiment. For grid search, we sampled 10 equidistant points along each axis of the feasible region for a total of 100 candidates; larger datasets and finer grids were not employed due to hardware limitations. Likewise, due to the necessity of high numerical precision for computing small changes in p-value, we opted to analyze only $\Delta r_1$ for these broad initial tests. All experiments were conducted in Numpy \cite{Harris2020} and SciPy \cite{virtanen2020scipy}; \textproc{ComputePrimarySensitivities} (\cref{alg}) itself was implemented primarily in C \cite{kernighan2002c} for efficiency with a Cython \cite{behnel2010cython} front-end for ease of use. 

Results (\cref{fig:res}) indicate that \textproc{ComputePrimarySensitivities} (\cref{alg}) consistently predicted values extremely close to that of grid search. For 98.28\% of the synthetic datasets, the sensitivities were nearly identical $\bigl($i.e., within a relative tolerance of $1 \times 10^{-5} \bigr)$. We conclude that our method correctly predicts the sensitivities of every dataset, with the occasional discrepancies between the two predictions reflecting floating point error or the coarseness of the grid.

\subsection{Real-world example: The Great Recession}

Among the many use cases for efficiently computing the sensitivity of the PCC, financial markets offer an excellent application for investors weighing portfolio decisions based on relationships between various assets \cite{campbell2008predicting, ang2002asymmetric, fama1992cross}. During the Great Recession of 2008, the financial sector became the epicenter of extreme market volatility \cite{reinhart2009aftermath}, prompting investors to closely monitor rapidly evolving market dynamics \cite{adrian2010liquidity}, consider rotating their portfolios to alternative sectors, and otherwise mitigate their exposure to unstable economic conditions \cite{ang2002asymmetric}. A key factor in portfolio stress-testing was the correlation between broad market indices on one hand and hedge funds, insurance companies, and other financial services on the other. The rationale was that in periods of economic crisis emanating from the financial sector, these services would exhibit far greater instability than the broader markets, suggesting that diverse portfolios would be more likely to mitigate financial risk \cite{reinhart2009aftermath, adrian2010liquidity}. However, the correlation between the financial sector and other markets was highly unstable on even a daily basis in 2008 \cite{reinhart2009aftermath}, meaning that primary sensitivities of correlations between financial instruments could have been highly informative.

For example, we consider the primary sensitivities of the SPDR S\&P 500 ETF Trust (SPY) \cite{SPY2008} and Financial Select Sector SPDR Fund (XLF) \cite{XLF2008} in September, 2008, when the DOW Jones Industrial Average plummeted to historic lows on the 29th \cite{reinhart2009aftermath}. Computing the PCC between the closing values of SPY and XLF during the week leading up to that date \cite{yfinance, virtanen2020scipy}, we find that the sample PCC of our dataset is $r_n(D) = 0.58028$, corresponding to a p-value of $p_n(D) = 0.30502$ (\cref{tab:res}).

While we might be inclined to have very low confidence in a linear relationship between the financial sector and the broader market, we suspect that this could change within a single day. In order to prepare for this possibility, we compute the primary sensitivities of the correlation between SPY and XLF. Although our data do not have a clear feasible region, we set lower bounds of 0 for both instruments and, for their upper bounds, posit that neither will exceed its respective maximum encountered in the foregoing week. Using our framework, we find that the primary sensitivities of $D$ are $\Delta_1 r(D, \mathcal F) = 1.15630$ and $\Delta_1 p(D, \mathcal F) = 0.69498$. In other words, our analyses indicate that by the end of trading hours on the 30th, it is fully possible that radical changes in market conditions may suddenly nullify the trend of non-linearity between SPY and XLF, rendering the PCC significant and prompting drastic changes in investment strategy. Surely enough, both indices plummet on the 29th, bringing the p-value to $4.91 \times 10^{-3}$, well within our predicted primary sensitivity.

To confirm the output of our algorithm, we compared the results to that of brute-force search across a much finer grid of 10,000 total points to increase precision for the p-value calculations. We find that our method predicts the correct optima, thereby validating our theory on empirical data. We also observed that the sensitivity $\Delta_1 p = 1 - p_n$ as per \cref{lem:8pointp}; the slight inaccuracy of the brute-force estimate is therefore likely due to the coarseness of the grid.

\begin{table}[ht]
\centering
\caption{Predicted vs. Brute-Force Sensitivity Estimates}
\label{tab:sensitivities}
\begin{tabular}{@{}lccc@{}}
\toprule
Metric & Value & Predicted $\Delta_1$ & Brute-Force $\Delta_1$ \\ \midrule
$r_n$ & 0.58028 & 1.15630 & 1.15630 \\ \midrule
$p_n$ & 0.30502 & 0.69498 & 0.69491 \\ \bottomrule
\end{tabular}
\label{tab:res}
\end{table}

\section{Discussion}
\label{sec:disc}

Practical applications of our approach are vast, extending to many fields in which dynamic datasets require reliable collinearity measures. In bioinformatics, the strength of associations between genetic markers and  phenotypes in genome-wide association studies (GWAS) may be subject to change if data are mined from expanding biobanks \cite{tam2019benefits, wang2005genome, littlejohns2020uk}. In climatology, associations between temperature and various environmental outcomes like ground water availability are famously subject to reevaluation in light of ongoing studies \cite{held2006robust, taylor2013ground, morice2012quantifying}. In behavioral science and other experimental disciplines, similar analyses may be employed to halt experiments prematurely if the strength of association between variables is determined to be sufficiently robust to new subject data \cite{crocker1986introduction, montgomery2017design, raykov2011introduction}. In the technology sector, dynamic environments like streaming platforms draw from real-time analytics to understand potentially unstable relationships between data streams, user preferences, internet traffic, etc. \cite{nguyen2008survey, isinkaye2015recommendation} In differential privacy, the local sensitivity of metrics like the PCC is key for methods like Propose-Test-Release (PTR) \cite{dwork2014algorithmic}. 

As to the theoretical doors opened by our work, future directions are also numerous. First, in many optimization tasks like linear programming, feasible regions are often presumed to be generally polygonal (or, rather, polyhedral for higher-dimensional tasks) as opposed to simply rectangular \cite{dantzig2002linear}. Arguably, the simplicity of our assumed feasible region is one of the primary limitations of our results. However, we conjecture that propositions similar to those proven above might be straightforwardly adapted for arbitrary piecewise linear boundaries, with basic feasible solutions playing roles similar to corner points in the case of a rectangular domain.

We also speculate that $k$-ary sensitivity can be efficiently computed as well by extending our theory further. In analyses not explicated in this paper, we were able to prove that the worst-case sequence of $k$ points to add to a bounded dataset must conform to certain structures, e.g., that each point in the sequence must also be a corner or intersection point. These patterns may offer an opportunity to compute the worst-case sequence via a combinatorial optimization approach.

Finally, our method of leveraging Welford's identities may be extended to other linear metrics such as covariance and the parameters of the best linear unbiased estimators (BLUE) for simple linear regression \cite{casella2024statistical}. For higher-dimensional datasets, similar analyses might be even extended to covariance matrices or principal components \cite{casella2024statistical}.

\section{Conclusion}
\label{sec:conc}

In this work, we develop a rigorous theoretical framework for analyzing changes to the Pearson correlation coefficient (PCC) \cite{pearson1895vii} and its p-value induced by the addition of new data points, as well an algorithm for computing these changes in practice. Our key contributions consist of formulating a recurrence relation for the PCC and p-value via identities vital to Welford's online algorithm \cite{welford1962note}, deriving closed-form solutions to this relation, and finally demonstrating that these solutions can be straightforwardly computed in linear or even constant time, depending on whether key parameters are precomputed. In addition to formal proofs of correctness, we validate our method on synthetic and real-world data to empirically confirm its efficacy.

Our code is made available at \href{https://github.com/marc-harary/sensitivity}{\texttt{https://github.com/marc-harary/sensitivity}} for both practical usage and ongoing algorithmic development in this area.

\bibliographystyle{plain}
\bibliography{refs}

\end{document}